
\documentclass{llncs}
\usepackage{version}
\usepackage{hl}

\pagestyle{plain}
\raggedbottom

\title{A Hoare-Like Logic of\\ Asserted Single-Pass Instruction Sequences}
\author{J.A. Bergstra \and C.A. Middelburg}
\institute{Informatics Institute, Faculty of Science, University of
           Amsterdam \\[1ex]
           Science Park~904, 1098~XH Amsterdam, the Netherlands \\[1ex]
           \email{J.A.Bergstra@uva.nl,C.A.Middelburg@uva.nl}}

\begin{document}
\maketitle

\begin{abstract}
We present a formal system for proving the partial correctness of a 
single-pass instruction sequence as considered in program algebra by 
decomposition into proofs of the partial correctness of segments of the 
single-pass instruction sequence concerned.
The system is similar to Hoare logics, but takes into account that, by 
the presence of jump instructions, segments of single-pass instruction 
sequences may have multiple entry points and multiple exit points.
It is intended to support a sound general understanding of the issues 
with Hoare-like logics for low-level programming languages.
\begin{keywords} 
Hoare logic, asserted single-pass instruction sequence, 
soundness, completeness in the sense of Cook.
\end{keywords}%
\begin{classcode}
D.2.4, F.3.1
\end{classcode}
\end{abstract}

\section{Introduction}
\label{sect-intro}

In~\cite{Hoa69a}, Hoare introduced a kind of formal system for proving 
the partial correctness of a program by decomposition into proofs of 
the partial correctness of segments of the program concerned.
Formal systems of this kind are now known as Hoare logics.
The programs considered in~\cite{Hoa69a} are programs in a simple 
high-level programming language without goto statements.
Hoare logics for this simple high-level programming language and 
extensions of it without goto statements have been extensively studied 
since (see e.g.~\cite{BT82b,Cla79a,Coo78a} for individual studies 
and~\cite{Apt81a} for a survey). 
Hoare logics and Hoare-like logics for simple high-level programming 
languages with goto statements have been studied since as well (see 
e.g.~\cite{Bru81a,CH72a,Wan76a}).

Work on Hoare-like logics for low-level programming languages started 
only recently. 
All the work that we know of takes ad hoc restrictions and features of 
machine- or assembly-level programs into account (see e.g.~\cite{MG07a}) 
or abstracts in an ad hoc way from instruction sequences as found in 
low-level programs (see e.g.~\cite{SU07a}).
We consider it important for a sound understanding of the issues in this 
area to give consideration to generality and faithfulness of abstraction
instead. 
This is what motivated us to do the work presented in this paper.

We present a Hoare-like logic for single-pass instruction sequences as 
considered in~\cite{BL02a}. 
The instruction sequences in question are finite or eventually periodic 
infinite sequences of instructions of which each instruction is executed 
at most once and can be dropped after it has been executed or jumped 
over.
We will come back to the choice for those instruction sequences. 
The presented Hoare-like logic has to take into account that, by the 
presence of jump instructions, segments of instruction sequences may 
have multiple entry points and multiple exit points.
Because of this, it is closer to the inductive assertion method for 
program flowcharts introduced by Floyd in~\cite{Flo67a} than most other
Hoare and Hoare-like logics.

The asserted programs of the form $\assprog{P}{S}{Q}$ of Hoare logics 
are replaced in the presented Hoare-like logic by asserted instruction 
sequences of the form $\assinseq{b}{P}{S}{e}{Q}$, where $b$ and $e$ are 
a positive natural number and a natural number, respectively.
$P$ and $Q$ are regular pre- and post-conditions.
That is, they concern the input-output behaviour of the instruction 
sequence segment $S$. 
Loosely speaking, $b$ represents the additional pre-condition that 
execution enters the instruction sequence segment $S$ at its $b$th 
instruction and, if $e$ is positive, $e$ represents the additional 
post-condition that either execution exits the instruction sequence 
segment $S$ by going to the $e$th instruction following it or becomes 
inactive in $S$.
In the case that $e$ equals zero, $e$ represents the addi\-tional 
post-condition that execution either terminates or becomes inactive in 
$S$ (instructions sequences with explicit termination instructions are
considered).

The form of the asserted instruction sequences is inspired 
by~\cite{Wan76a}.
However, under the interpretation of~\cite{Wan76a}, $e$ would represent 
the additional post-condition that execution reaches the $e$th 
instruction following the first instruction of the instruction sequence 
segment concerned.
Because this may be an instruction before the first instruction 
following the segment, this interpretation allows of asserted 
instruction sequences that concern the internals of the segment.
For this reason, we consider this interpretation not conducive to 
compositional proofs.

In other related work, e.g.\ in~\cite{SU07a}, the additional pre- and 
post-condition represented by $b$ and $e$ must be explicitly formulated
and conjoined with the regular pre- and post-condition, respectively.
This alternative reduces the conciseness of pre- and post-conditions
considerably.
Moreover, an effect ensuing from this alternative is that assertions can 
be formulated in which aspects of input-output behaviour and flow of 
execution are combined in ways that are unnecessary for proving partial 
correctness.
For these reasons, we decided not to opt for this~alternative.

There is a tendency in work on Hoare-like logics to use a separation 
logic instead of classical first-order logic for pre- and 
post-conditions to deal with programs that alter data structures
(see e.g.~\cite{Rey02a}).
This tendency is also found in work on Hoare-like logics for low-level 
programming languages (see e.g.~\cite{JBK13a}).
Because our intention is to present a Hoare-like logic that supports a 
sound general understanding of the issues with Hoare-like logics for 
low-level programming languages, we believe that we should stick to 
classical first-order logic until it proves to be inadequate.
This is the reason why classical first-order logic is used for pre- and 
post-conditions in this paper.

As mentioned before, the presented Hoare-like logic concerns single-pass 
instruction sequences as considered in~\cite{BL02a}. 
It is often said that a program is an instruction sequence and, if this 
characterization has any value, it must be the case that it is somehow 
easier to understand the concept of an instruction sequence than to 
understand the concept of a program. 
The first objective of the work on instruction sequences that started 
with~\cite{BL02a}, and of which an enumeration is available 
at~\cite{SiteIS}, is to understand the concept of a program.
The basis of all this work is an algebraic theory of single-pass 
instruction sequences, called program algebra, and an algebraic theory 
of mathematical objects that represent in a direct way the behaviours 
produced by instruction sequences under execution, called basic thread 
algebra.%
\footnote
{In~\cite{BL02a}, basic thread algebra is introduced under the name
 basic polarized process algebra.}
The body of theory developed through this work is such that its use as a
conceptual preparation for programming is practically feasible.

The notion of an instruction sequence appears in the work in question as 
a mathematical abstraction for which the rationale is based on the 
objective mentioned above. 
In this capacity, instruction sequences constitute a primary field of 
investigation in programming comparable to propositions in logic and 
rational numbers in arithmetic. 
The structure of the mathematical abstraction at issue has been 
determined in advance with the hope of applying it in diverse 
circumstances where in each case the fit may be less than perfect.
Until now, this work has, among other things, yielded an approach to 
computational complexity where program size is used as complexity 
measure, a contribution to the conceptual analysis of the notion of an 
algorithm, and new insights into such diverse issues as the halting 
problem, garbage collection, program parallelization for the purpose of 
explicit multi-threading and virus detection.

Judging by our experience gained in the work referred to above, we think 
that generality and faithfulness of abstraction are well taken into 
consideration in a Hoare-like logic for single-pass instruction 
sequences as considered in~\cite{BL02a}.
This explains the choice for those instruction sequences.
As in the work referred to above, the work presented in this paper is 
carried out in the setting of program algebra and basic thread algebra.

This paper is organized as follows.
First, we give a survey of program algebra and basic thread algebra 
(Section~\ref{sect-PGA-and-BTA}) and a survey of the extension of basic 
thread algebra that is used in this paper (Section~\ref{sect-TSI}).
Next, we present a Hoare-like logic of asserted single-pass instruction 
sequences (Section~\ref{sect-HL-PC}), give an example of its use  
(Section~\ref{sect-example}), and show that it is sound and complete in 
the sense of Cook (Section~\ref{sect-sound-complete}). 
Finally, we make some concluding remarks (Section~\ref{sect-concl}).

Some familiarity with algebraic specification is assumed in this paper.
The relevant notions are explained in handbook chapters and books on 
algebraic specification, e.g.~\cite{EM85a,ST99a,ST12a,Wir90a}.

The preliminaries to the work presented in this paper 
(Sections~\ref{sect-PGA-and-BTA} and~\ref{sect-TSI}) are almost the 
same as the preliminaries to the work presented in~\cite{BM15a} and 
earlier papers.
For this reason, there is some text overlap with those papers.
Apart from the preliminaries, the material in this paper is new.
A comprehensive introduction to what is surveyed in the preliminary 
sections can among other things be found in~\cite{BM12b}.

\section{Program Algebra and Basic Thread Algebra}
\label{sect-PGA-and-BTA}

In this section, we give a survey of \PGA\ (ProGram Algebra) and \BTA\ 
(Basic Thread Algebra) and make precise in the setting of \BTA\ which 
behaviours are produced by the instruction sequences considered in \PGA\
under execution.
The greater part of this section originates from~\cite{BM13a}.

In \PGA, it is assumed that there is a fixed but arbitrary set $\BInstr$
of \emph{basic instructions}.
The intuition is that the execution of a basic instruction may modify a 
state and produces a reply at its completion.
The possible replies are $\False$ and $\True$.
The actual reply is generally state-dependent.
The set $\BInstr$ is the basis for the set of instructions that may 
occur in the instruction sequences considered in \PGA.
The elements of the latter set are called \emph{primitive instructions}.
There are five kinds of primitive instructions:
\begin{itemize}
\item
for each $a \in \BInstr$, a \emph{plain basic instruction} $a$;
\item
for each $a \in \BInstr$, a \emph{positive test instruction} $\ptst{a}$;
\item
for each $a \in \BInstr$, a \emph{negative test instruction} $\ntst{a}$;
\item
for each $l \in \Nat$, a \emph{forward jump instruction} $\fjmp{l}$;
\item
a \emph{termination instruction} $\halt$.
\end{itemize}
We write $\PInstr$ for the set of all primitive instructions.

On execution of an instruction sequence, these primitive instructions
have the following effects:
\begin{itemize}
\item
the effect of a positive test instruction $\ptst{a}$ is that basic
instruction $a$ is executed and execution proceeds with the next
primitive instruction if $\True$ is produced and otherwise the next
primitive instruction is skipped and execution proceeds with the
primitive instruction following the skipped one --- if there is no
primitive instruction to proceed with, execution becomes inactive;
\item
the effect of a negative test instruction $\ntst{a}$ is the same as
the effect of $\ptst{a}$, but with the role of the value produced
reversed;
\item
the effect of a plain basic instruction $a$ is the same as the effect
of $\ptst{a}$, but execution always proceeds as if $\True$ is produced;
\item
the effect of a forward jump instruction $\fjmp{l}$ is that execution
proceeds with the $l$th next primitive instruction --- if $l$ equals $0$ 
or there is no primitive instruction to proceed with, execution becomes 
inactive;
\item
the effect of the termination instruction $\halt$ is that execution 
terminates.
\end{itemize}
Execution becomes inactive if no more basic instructions are executed, 
but execution does not terminate.

\PGA\ has one sort: the sort $\InSeq$ of \emph{instruction sequences}. 
We make this sort explicit to anticipate the need for many-sortedness
later on.
To build terms of sort $\InSeq$, \PGA\ has the following constants and 
operators:
\begin{itemize}
\item
for each $u \in \PInstr$, 
the \emph{instruction} constant $\const{u}{\InSeq}$\,;
\item
the binary \emph{concatenation} operator 
$\funct{\ph \conc \ph}{\InSeq \x \InSeq}{\InSeq}$\,;
\item
the unary \emph{repetition} operator 
$\funct{\ph\rep}{\InSeq}{\InSeq}$\,.
\end{itemize} 
Terms of sort $\InSeq$ are built as usual in the one-sorted case.%
\footnote{Notice that all \PGA\ term are of sort $\InSeq$.}
We assume that there are infinitely many variables of sort $\InSeq$, 
including $X,Y,Z$.
We use infix notation for concatenation and postfix notation for
repetition.
Hence, taking these notational conventions into account, the syntax of 
closed terms of sort $\InSeq$ can be defined in Backus-Naur style as 
follows:
\begin{ldispl}
\nm{CT} \Sis 
a \Sor \ptst{a} \Sor \ntst{a} \Sor \fjmp{l} \Sor \halt \Sor 
\nm{CT} \conc \nm{CT} \Sor \nm{CT} \rep\;, 
\end{ldispl}%
where $a \in \BInstr$ and $l \in \Nat$.

A closed \PGA\ term is considered to denote a non-empty, finite or
eventually periodic infinite sequence of primitive instructions.%
\footnote
{An eventually periodic infinite sequence is an infinite sequence with
 only finitely many distinct suffixes.}
The instruction sequence denoted by a closed term of the form
$t \conc t'$ is the instruction sequence denoted by $t$
concatenated with the instruction sequence denoted by $t'$.
The instruction sequence denoted by a closed term of the form $t\rep$
is the instruction sequence denoted by $t$ concatenated infinitely
many times with itself.
A simple example of a closed PGA term is
\begin{ldispl}
(\ptst{a} \conc \fjmp{2} \conc \fjmp{3} \conc b \conc \halt)\rep\;.
\end{ldispl}%
On execution of the instruction sequence denoted by this term, first the 
basic instruction $a$ is executed repeatedly until its execution 
produces the reply $\True$, next the basic instruction $b$ is executed, 
and after that execution terminates.

Closed \PGA\ terms are considered equal if they represent the same
instruction sequence.
The axioms for instruction sequence equivalence are given in
Table~\ref{axioms-PGA}.%
\begin{table}[!t]
\caption{Axioms of \PGA}
\label{axioms-PGA}
\begin{eqntbl}
\begin{axcol}
(X \conc Y) \conc Z = X \conc (Y \conc Z)              & \axiom{PGA1} \\
(X^n)\rep = X\rep                                      & \axiom{PGA2} \\
X\rep \conc Y = X\rep                                  & \axiom{PGA3} \\
(X \conc Y)\rep = X \conc (Y \conc X)\rep              & \axiom{PGA4}
\end{axcol}
\end{eqntbl}
\end{table}
In this table, $n$ stands for an arbitrary positive natural number.
For each natural number $n$, the term $t^n$, where $t$ is a \PGA\ term, 
is defined by induction on $n$ as follows: $t^0 = \fjmp{0}$, $t^1 = t$, 
and $t^{n+2} = t \conc t^{n+1}$.
Some simple examples of equations derivable from the axioms of \PGA\ are
\begin{ldispl}
(a \conc b)\rep \conc \halt = a \conc (b \conc a)\rep\;,
\\
\ptst{a} \conc (b \conc (\ntst{c} \conc \fjmp{2} \conc \halt)\rep)\rep
=
\ptst{a} \conc b \conc (\ntst{c} \conc \fjmp{2} \conc \halt)\rep\;.
\end{ldispl}%

A typical model of \PGA\ is the model in which:
\begin{itemize}
\item
the domain is the set of all finite and eventually periodic infinite
sequences over the set $\PInstr$ of primitive instructions;
\item
the operation associated with ${} \conc {}$ is concatenation;
\item
the operation associated with ${}\rep$ is the operation ${}\srep$
defined as follows:

\begin{itemize}
\item
if $U$ is a finite sequence over $\PInstr$, then $U\srep$ is the unique 
infinite sequence $U'$ such that $U$ concatenated $n$ times with itself 
is a proper prefix of $U'$ for each $n \in \Nat$;
\item
if $U$ is an infinite sequence over $\PInstr$, then $U\srep$ is $U$.
\end{itemize}
\end{itemize}
We confine ourselves to this model of \PGA, which is an initial model of 
\PGA, for the interpretation of \PGA\ terms.
In the sequel, we use the term \emph{PGA instruction sequence} for the 
elements of the domain of this model and write $\len(t)$, where $t$ is a 
closed \PGA\ term denoting a finite \PGA\ instruction sequence, for the 
length of the \PGA\ instruction sequence denoted by $t$.
We stipulate that $\len(t) = \omega$ if $t$ is a closed \PGA\ term 
denoting an infinite instruction sequence, where $n < \omega$ for all 
$n \in \Nat$.

Below, we will use \BTA\ to make precise which behaviours are produced 
by \PGA\ instruction sequences under execution.

In \BTA, it is assumed that a fixed but arbitrary set $\BAct$ of
\emph{basic actions} has been given.
The objects considered in \BTA\ are called threads.
A thread represents a behaviour which consists of performing basic 
actions in a sequential fashion.
Upon each basic action performed, a reply from an execution environment
determines how the thread proceeds.
The possible replies are the values $\False$ and $\True$.

\BTA\ has one sort: the sort $\Thr$ of \emph{threads}. 
We make this sort explicit to anticipate the need for many-sortedness
later on.
To build terms
of sort $\Thr$, \BTA\ has the following constants and operators:
\begin{itemize}
\item
the \emph{inaction} constant $\const{\DeadEnd}{\Thr}$;
\item
the \emph{termination} constant $\const{\Stop}{\Thr}$;
\item
for each $a \in \BAct$, the binary \emph{postconditional composition} 
operator $\funct{\pcc{\ph}{a}{\ph}}{\linebreak[2]\Thr \x \Thr}{\Thr}$.
\end{itemize}
Terms of sort $\Thr$ are built as usual in the one-sorted case. 
We assume that there are infinitely many variables of sort $\Thr$, 
including $x,y$.
We use infix notation for postconditional composition. 
We introduce \emph{basic action prefixing} as an abbreviation: 
$a \bapf t$, where $t$ is a \BTA\ term, abbreviates 
$\pcc{t}{a}{t}$.
We identify expressions of the form $a \bapf t$ with the \BTA\
term they stand for.

The thread denoted by a closed term of the form $\pcc{t}{a}{t'}$
will first perform $a$, and then proceed as the thread denoted by
$t$ if the reply from the execution environment is $\True$ and proceed
as the thread denoted by $t'$ if the reply from the execution
environment is $\False$. 
The thread denoted by $\Stop$ will do no more than terminate and the 
thread denoted by $\DeadEnd$ will become inactive.
A simple example of a closed \BTA\ term is
\begin{ldispl}
\pcc{(b \bapf \Stop)}{a}{\DeadEnd}\;.
\end{ldispl}%
This term denotes the thread that first performs basic action
$a$, if the reply from the execution environment on performing $a$ is
$\True$, next performs the basic action $b$ and after that terminates,
and if the reply from the execution environment on performing $a$ is
$\False$, next becomes inactive.

Closed \BTA\ terms are considered equal if they are syntactically the
same.
Therefore, \BTA\ has no axioms.

Each closed \BTA\ term denotes a finite thread, i.e.\ a thread with a
finite upper bound to the number of basic actions that it can perform.
Infinite threads, i.e.\ threads without a finite upper bound to the
number of basic actions that it can perform, can be defined by means of 
a set of recursion equations (see e.g.~\cite{BM09k}).
We are only interested in models of \BTA\ in which sets of recursion 
equations have unique solutions, such as the projective limit model 
of \BTA\ presented in~\cite{BM12b}.
We confine ourselves to this model of \BTA, which has an initial model 
of \BTA\ as a submodel, for the interpretation of \BTA\ terms. 
In the sequel, we use the term \emph{BTA thread} or simply \emph{thread} 
for the elements of the domain of this model.

Regular threads, i.e.\ finite or infinite threads that can only be in a 
finite number of states, can be defined by means of a finite set of 
recursion equations.
The behaviours produced by \PGA\ instruction sequences under execution 
are exactly the behaviours represented by regular threads, with the basic 
instructions taken for basic actions.
The behaviours produced by finite \PGA\ instruction sequences under 
execution are the behaviours represented by finite threads.

We combine \PGA\ with \BTA\ and extend the combination with
the \emph{thread extraction} operator $\funct{\extr{\ph}}{\InSeq}{\Thr}$, 
the axioms given in Table~\ref{axioms-thread-extr},%
\begin{table}[!tb]
\caption{Axioms for the thread extraction operator}
\label{axioms-thread-extr}
\begin{eqntbl}
\begin{eqncol}
\extr{a} = a \bapf \DeadEnd \\
\extr{a \conc X} = a \bapf \extr{X} \\
\extr{\ptst{a}} = a \bapf \DeadEnd \\
\extr{\ptst{a} \conc X} =
\pcc{\extr{X}}{a}{\extr{\fjmp{2} \conc X}} \\
\extr{\ntst{a}} = a \bapf \DeadEnd \\
\extr{\ntst{a} \conc X} =
\pcc{\extr{\fjmp{2} \conc X}}{a}{\extr{X}}
\end{eqncol}
\qquad
\begin{eqncol}
\extr{\fjmp{l}} = \DeadEnd \\
\extr{\fjmp{0} \conc X} = \DeadEnd \\
\extr{\fjmp{1} \conc X} = \extr{X} \\
\extr{\fjmp{l+2} \conc u} = \DeadEnd \\
\extr{\fjmp{l+2} \conc u \conc X} = \extr{\fjmp{l+1} \conc X} \\
\extr{\halt} = \Stop \\
\extr{\halt \conc X} = \Stop
\end{eqncol}
\end{eqntbl}
\end{table}
and the rule that $\extr{X} = \DeadEnd$ if $X$ has an infinite chain of 
forward jumps beginning at its first primitive instruction.%
\footnote
{This rule, which can be formalized using an auxiliary structural 
congruence predicate (see e.g.~\cite{BM07g}), is unnecessary when 
considering only finite \PGA\ instruction sequences.
}
In Table~\ref{axioms-thread-extr}, $a$ stands for an arbitrary basic 
instruction from $\BInstr$, $u$ stands for an arbitrary primitive 
instruction from $\PInstr$, and $l$ stands for an arbitrary natural 
number from $\Nat$.
For each closed \PGA\ term $t$, $\extr{t}$ denotes the behaviour  
produced by the instruction sequence denoted by $t$ under execution.

A simple example of thread extraction is
\begin{ldispl}
\extr{\ptst{a} \conc \fjmp{2} \conc \fjmp{3} \conc b \conc \halt} =
\pcc{(b \bapf \Stop)}{a}{\DeadEnd}\;,
\end{ldispl}%
In the case of infinite instruction sequences, thread extraction yields 
threads definable by means of a set of recursion equations.
For example, 
\begin{ldispl}
\extr{(\ptst{a} \conc \fjmp{2} \conc \fjmp{3} \conc b \conc \halt)\rep}
\end{ldispl}%
is the solution of the set of recursion equations that consists of the 
single equation
\begin{ldispl}
x = \pcc{(b \bapf \Stop)}{a}{x}\;.
\end{ldispl}%

\section{Interaction of Threads with Services}
\label{sect-TSI}

Services are objects that represent the behaviours exhibited by 
components of execution environments of instruction sequences at a high 
level of abstraction.
A service is able to process certain methods.
The processing of a method may involve a change of the service.
At completion of the processing of a method, the service produces a
reply value.
For example, a service may be able to process methods for pushing a
natural number on a stack ($\push{n}$), popping the top element from the 
stack ($\pop$), and testing whether the top element of the stack equals 
a natural number ($\topeq{n}$).
Processing of a pushing method or a popping method changes the service 
and produces the reply value $\True$ if no stack underflow occurs and 
$\False$ otherwise.
Processing of a testing method does not change the service and produces 
the reply value $\True$ if the test succeeds and $\False$ otherwise.

Execution environments are considered to provide a family of 
uniquely-named services.
A thread may interact with the named services from the service family 
provided by an execution environment.
That is, a thread may perform a basic action for the purpose of 
requesting a named service to process a method and to return a reply 
value at completion of the processing of the method.
In this section, we extend \BTA\ with services, service families, a 
composition operator for service families, and an operator that is 
concerned with this kind of interaction.
This section originates from~\cite{BM09k}.

In \SFA, the algebraic theory of service families introduced below, it 
is assumed that a fixed but arbitrary set $\Meth$ of \emph{methods} has 
been given.
Moreover, the following is assumed with respect to services:
\begin{itemize}
\item
a signature $\Sig{\ServAlg}$ has been given that includes the following
sorts:
\begin{itemize}
\item
the sort $\Serv$ of
\emph{services};
\item
the sort $\Repl$ of \emph{replies};
\end{itemize}
and the following constants and operators:
\begin{itemize}
\item
the
\emph{empty service} constant $\const{\emptyserv}{\Serv}$;
\item
the \emph{reply} constants $\const{\False,\True,\Div}{\Repl}$;
\item
for each $m \in \Meth$, the
\emph{derived service} operator $\funct{\derive{m}}{\Serv}{\Serv}$;
\item
for each $m \in \Meth$, the
\emph{service reply} operator $\funct{\sreply{m}}{\Serv}{\Repl}$;
\end{itemize}
\item
a minimal $\Sig{\ServAlg}$-algebra $\ServAlg$ has been given in which
$\False$, $\True$, and $\Div$ are mutually different, and
\begin{itemize}
\item
$\LAND{m \in \Meth}{}
  \derive{m}(z) = z \Land \sreply{m}(z) = \Div \Limpl
  z = \emptyserv$
holds;
\item
for each $m \in \Meth$,
$\derive{m}(z) = \emptyserv \Liff \sreply{m}(z) = \Div$ holds.
\end{itemize}
\end{itemize}

The intuition concerning $\derive{m}$ and $\sreply{m}$ is that on a
request to service $s$ to process method $m$:
\begin{itemize}
\item
if $\sreply{m}(s) \neq \Div$, $s$ processes $m$, produces the reply
$\sreply{m}(s)$, and then proceeds as $\derive{m}(s)$;
\item
if $\sreply{m}(s) = \Div$, $s$ is not able to process method $m$ and
proceeds as $\emptyserv$.
\end{itemize}
The empty service $\emptyserv$ itself is unable to process any method.

The actual services could, for example, be the natural number stack 
services sketched at the beginning of this section.
In that case, we take the set
$\set{\NNS_{\sigma} \where \sigma \in \seqof{\Nat}}$ of 
\emph{natural number stack services} as the set $\Services$ of services
and, for each $m \in \Meth$, we take the functions $\derive{m}$ and
$\sreply{m}$ such that ($n,n' \in \Nat$, $\sigma \in \seqof{\Nat}$):%
\footnote
{We write $\emptyseq$ for the empty sequence and $n \sigma$ for the
 sequence $\sigma$ with $n$ prepended to it.}
\begin{ldispl}
\begin{geqns}
\derive{\push{n}}(\NNS_{\sigma})     = \NNS_{n \sigma}\;,
\\
\derive{\pop}(\NNS_{n' \sigma})      = \NNS_{\sigma}\;,
\\
\derive{\pop}(\NNS_{\emptyseq})      = \NNS_{\emptyseq}\;,
\\
\derive{\topeq{n}}(\NNS_{n' \sigma}) = \NNS_{n' \sigma}\;,
\\ {} \\
\derive{\topeq{n}}(\NNS_{\emptyseq}) = \NNS_{\emptyseq}\;,
\\
\derive{m}(\NNS_{\sigma}) = \emptyserv \;\; \mif m \notin \Meth_\NNS\;,
\end{geqns}
\quad
\begin{gceqns}
\sreply{\push{n}}(\NNS_{\sigma})     = \True\;,
\\
\sreply{\pop}(\NNS_{n' \sigma})      = \True\;,
\\
\sreply{\pop}(\NNS_{\emptyseq})      = \False\;,
\\
\sreply{\topeq{n}}(\NNS_{n' \sigma}) = \True  & \; \mif n = n'\;,
\\
\sreply{\topeq{n}}(\NNS_{n' \sigma}) = \False & \; \mif n \neq n'\;,
\\
\sreply{\topeq{n}}(\NNS_{\emptyseq}) = \False\;,
\\
\sreply{m}(\NNS_{\sigma}) = \Div       & \; \mif m \notin \Meth_\NNS\;,
\end{gceqns}
\end{ldispl}%
where 
$\Meth_\NNS =
 \set{\push{n} \where n \in \Nat} \union \set{pop} \union
 \set{\topeq{n} \where n \in \Nat}$.

It is also assumed that a fixed but arbitrary set $\Foci$ of
\emph{foci} has been given.
Foci play the role of names of services in a service family. 

\SFA\ has the sorts, constants and operators from $\Sig{\ServAlg}$ and
in addition the sort $\ServFam$ of \emph{service families} and the 
following constant and operators:
\begin{itemize}
\item
the
\emph{empty service family} constant $\const{\emptysf}{\ServFam}$;
\item
for each $f \in \Foci$, the unary
\emph{singleton service family} operator
$\funct{\mathop{f{.}} \ph}{\Serv}{\ServFam}$;
\item
the binary
\emph{service family composition} operator
$\funct{\ph \sfcomp \ph}{\ServFam \x \ServFam}{\ServFam}$;
\item
for each $F \subseteq \Foci$, the unary
\emph{encapsulation} operator $\funct{\encap{F}}{\ServFam}{\ServFam}$.
\end{itemize}
We assume that there are infinitely many variables of sort $\Serv$,
including $z$, and infinitely many variables of sort $\ServFam$,
including $u,v,w$.
Terms are built as usual in the many-sorted case
(see e.g.~\cite{ST99a,Wir90a}).
We use prefix notation for the singleton service family operators and
infix notation for the service family composition operator.
We write $\Sfcomp{i = 1}{n} t_i$, where $t_1,\ldots,t_n$ are
terms of sort $\ServFam$, for the term
$t_1 \sfcomp \ldots \sfcomp t_n$.

The service family denoted by $\emptysf$ is the empty service family.
The service family denoted by a closed term of the form $f.t$ consists 
of one named service only, the service concerned is the service denoted 
by $t$, and the name of this service is $f$.
The service family denoted by a closed term of the form
$t \sfcomp t'$ consists of all named services that belong to either the
service family denoted by $t$ or the service family denoted by $t'$.
In the case where a named service from the service family denoted by
$t$ and a named service from the service family denoted by $t'$ have
the same name, they collapse to an empty service with the name
concerned.
The service family denoted by a closed term of the form $\encap{F}(t)$ 
consists of all named services with a name not in $F$ that belong to
the service family denoted by $t$.

The axioms of \SFA\ are given in 
Table~\ref{axioms-SFA}.%
\begin{table}[!t]
\caption{Axioms of \SFA}
\label{axioms-SFA}
{
\begin{eqntbl}
\begin{axcol}
u \sfcomp \emptysf = u                                 & \axiom{SFC1} \\
u \sfcomp v = v \sfcomp u                              & \axiom{SFC2} \\
(u \sfcomp v) \sfcomp w = u \sfcomp (v \sfcomp w)      & \axiom{SFC3} \\
f.z \sfcomp f.z' = f.\emptyserv                  & \axiom{SFC4}
\end{axcol}
\qquad
\begin{saxcol}
\encap{F}(\emptysf) = \emptysf                     & & \axiom{SFE1} \\
\encap{F}(f.z) = \emptysf & \mif f \in F       & \axiom{SFE2} \\
\encap{F}(f.z) = f.z    & \mif f \notin F    & \axiom{SFE3} \\
\multicolumn{2}{@{}l@{\quad}}
 {\encap{F}(u \sfcomp v) =
  \encap{F}(u) \sfcomp \encap{F}(v)}               & \axiom{SFE4}
\end{saxcol}
\end{eqntbl}
}
\end{table}
In this table, $f$ stands for an arbitrary focus from $\Foci$ and
$F$ stands for an arbitrary subset of $\Foci$.
These axioms simply formalize the informal explanation given
above.

For the set $\BAct$ of basic actions, we now take the set
$\set{f.m \where f \in \Foci, m \in \Meth}$.
Performing a basic action $f.m$ is taken as making a request to the
service named $f$ to process method $m$.

We combine \BTA\ with \SFA\ and extend the combination with the 
following operator:
\begin{itemize}
\item
the binary \emph{apply} operator
$\funct{\ph \sfapply \ph}{\Thr \x \ServFam}{\ServFam}$;
\end{itemize}
and the axioms given in Table~\ref{axioms-apply}.%
\begin{table}[!t]
\caption{Axioms for the apply operator}
\label{axioms-apply}
\begin{eqntbl}
\begin{saxcol}
\Stop  \sfapply u = u                                  & & \axiom{A1} \\
\DeadEnd \sfapply u = \emptysf                         & & \axiom{A2} \\
(\pcc{x}{f.m}{y}) \sfapply \encap{\set{f}}(u) = \emptysf
                                                       & & \axiom{A3} \\
(\pcc{x}{f.m}{y}) \sfapply (f.t \sfcomp \encap{\set{f}}(u)) =
x \sfapply (f.\derive{m}t \sfcomp \encap{\set{f}}(u))
                           & \mif \sreply{m}(t) = \True  & \axiom{A4} \\
(\pcc{x}{f.m}{y}) \sfapply (f.t \sfcomp \encap{\set{f}}(u)) =
y \sfapply (f.\derive{m}t \sfcomp \encap{\set{f}}(u))
                           & \mif \sreply{m}(t) = \False & \axiom{A5} \\
(\pcc{x}{f.m}{y}) \sfapply (f.t \sfcomp \encap{\set{f}}(u)) = \emptysf
                           & \mif \sreply{m}(t) = \Div   & \axiom{A6}
\end{saxcol}
\end{eqntbl}
\end{table}
In this table, $f$ stands for an arbitrary focus from $\Foci$, $m$ 
stands for an arbitrary method from $\Meth$, and $t$ stands for an 
arbitrary term of sort $\Serv$.
The axioms formalize the informal explanation given below and in 
addition stipulate what is the result of apply if inappropriate foci 
or methods are involved.
We use infix notation for the apply operator.

The service family denoted by a closed term of the form $t \sfapply t'$ 
is the service family that results from processing the method of each 
basic action performed by the thread denoted by $t$ by the service in 
the service family denoted by $t'$ with the focus of the basic action as 
its name if such a service exists.
When the method of a basic action performed by a thread is processed by
a service, the service changes in accordance with the method concerned
and the thread reduces to one of the two threads that it can possibly 
proceed with dependent on the reply value produced by the service.

In the case of the stack services described earlier in this section, the 
following two equations are simple examples of derivable equations: 
\begin{ldispl}
(\pcc{(\nns.\pop \bapf \Stop)}{\nns.\topeq{0}}{\Stop}) \sfapply
\nns.\NNS_{0 \sigma} = 
\nns.\NNS_\sigma\;,
\\
(\pcc{(\nns.\pop \bapf \Stop)}{\nns.\topeq{0}}{\Stop}) \sfapply
\nns.\NNS_{1 \sigma} = 
\nns.\NNS_{1 \sigma}\;.
\end{ldispl}%

\section{Hoare-Like Logic for \PGA\protect\footnotemark}
\label{sect-HL-PC}
\footnotetext
{The term ``Hoare-like logic'', which stands for ``logic like Hoare'' if 
 taken literally, is widely used since 1981 with the meaning ``logic like 
 Hoare logic'' and we conform to this usage.
}

In this section, we introduce a formal system for proving the partial 
correctness of instruction sequences as considered in \PGA.
Unlike segments of programs written in the high-level programming 
languages for which Hoare logics have been developed, segments of 
single-pass instruction sequences may have multiple entry points and 
multiple exit points.
Therefore, the asserted programs of the form $\assprog{P}{S}{Q}$ of 
Hoare logics fall short in the case of single-pass instruction 
sequences.
The formulas in the formal system introduced here will be called 
asserted instruction sequences.

We will look upon foci as (program) variables. 
This is justified by the fact that foci are names of objects that may be 
modified on execution of an instruction sequence.
The objects concerned are services.
What is assumed here with respect to services is the same as in 
Section~\ref{sect-TSI}.
This means that a signature $\Sig{\ServAlg}$ that includes specific 
sorts, constants and operators and a minimal $\Sig{\ServAlg}$-algebra 
$\ServAlg$ that satisfies specific conditions have been given.

In the formal system introduced here, classical first-order logic with
equality is used for pre- and post-conditions.
The particular choice of log\-i\-cal constants, connectives and 
quantifiers does not matter.
However, for convenience, it is assumed that the following is included:
(a)~the constants $\Ltrue$ (for truth) and $\Lfalse$ (for falsity), 
(b)~the connectives $\Lnot$ (for negation), $\Land$ (for conjunction), 
$\Lor$ (for disjunction), and $\Limpl$ (for implication), 
(c)~the quantifiers $\forall$ (for universal quantification) and 
$\exists$ (for existential quantification).

We write $\Lang{\ServAlg}$ for the many-sorted first-order language with 
equality over the signature $\Sig{\ServAlg}$ where free variables of 
sort $\Serv$ belong to the set $\Foci$.
Moreover, we write $\CTerm{\InSeq}$ for the set of all closed terms of 
sort $\InSeq$ in the case where the set 
$\set{f.m \where f \in \Foci, m \in \Meth}$ is taken as the set 
$\BInstr$ of basic instructions.

An \emph{asserted instruction sequence} is a formula of the form
$\assinseq{b}{P}{S}{e}{Q}$, where $S \in \CTerm{\InSeq}$,
$P,Q \in \Lang{\ServAlg}$, $b \in \Natpos$, and $e \in \Nat$.%
\footnote
{We write $\Natpos$ for the set $\set{n \in \Nat \where n > 0}$.}
The intuitive meaning of an asserted instruction sequence 
$\assinseq{b}{P}{S}{e}{Q}$ is as follows:
\begin{itemize}
\item
if $b \leq \len(S)$ and $e > 0$, the intuitive meaning is: 
\begin{quote}
if execution enters the instruction sequence segment $S$ at its $b$th 
instruction and $P$ holds when execution enters $S$, then either 
execution becomes inactive in $S$ or execution exits $S$ by going to 
the $e$th instruction following $S$ and $Q$ holds when execution exits 
$S$;
\end{quote}
\item
if $b \leq \len(S)$ and $e = 0$, the intuitive meaning is: 
\begin{quote}
if execution enters the instruction sequence segment $S$ at its $b$th 
instruction and $P$ holds when execution enters $S$, then either 
execu\-tion becomes inactive in $S$ or execution terminates in $S$ 
and $Q$ holds when execution terminates in \nolinebreak[2] $S$;%
\footnote
{Recall that execution becomes inactive if no more basic instructions
 are executed, but execution does not terminate.}
\end{quote}
\item
if $b > \len(S)$, an intuitive meaning is lacking.
\end{itemize}
For convenience, we did not exclude the case where $b > \len(S)$.
Instead, we made the choice that any asserted instruction sequence 
$\assinseq{b}{P}{S}{e}{Q}$ with $b > \len(S)$ does not hold 
(irrespective of the choice of $\ServAlg$).

Before we make precise what it means that an asserted instruction 
sequence holds in $\ServAlg$, we introduce some special terminology and 
notation.

In the setting of \PGA, what we mean by a \emph{state} is a function 
from a finite subset of $\Foci$ to the interpretation of sort $\Serv$ in 
$\ServAlg$.
Let $F \subset \Foci$ be such that $F$ is finite.
Then a \emph{state representing term for} $F$ \emph{with respect to} 
$\ServAlg$ is a closed term $t$ of sort $\ServFam$ for which, for all 
$f \in F$, $\encap{\set{f}}(t) = t$ does not hold in the free extension 
of $\ServAlg$ to a model of \SFA.
Notice that $\encap{\set{f}}(t) = t$ does not hold iff the 
interpretation of $t$ is a service family to which a service with name 
$f$ belongs.
Let $P \in \Lang{\ServAlg}$, and
let $F'$ be the set all foci that belong to the free variables of $P$.
Then a \emph{state representing term for} $P$ \emph{with respect to} 
$\ServAlg$ is a closed term $t$ of sort $\ServFam$ that is a state 
representing term for $F'$ with respect to $\ServAlg$.
Let $S \in \CTerm{\InSeq}$, and
let $F''$ be the set all foci that occur in $S$.
Then a \emph{state representing term for} $S$ \emph{with respect to} 
$\ServAlg$ is a closed term $t$ of sort $\ServFam$ that is a state 
representing term for $F''$ with respect to $\ServAlg$.

We write $P[t]$, where $t$ is a state representing term for $P$ with 
respect to $\ServAlg$, for $P$ with, for each $f \in \Foci$, all free 
occurrences of $f$ replaced by a closed term $t'$ of sort $\Serv$ such 
that $t = f.t' \sfcomp \encap{\set{f}}(t)$ holds in the free extension 
of $\ServAlg$ to a model of \SFA.
Thus, the interpretation of the term $t'$ replacing the free occurrences 
of $f$ is the service associated with $f$ in the state represented by 
$t$.
Notice that an equation between terms of sort $\ServFam$ holds in the 
free extension of $\ServAlg$ to a model of \SFA\ iff it is derivable 
from the axioms of \SFA.

We write $\extr{S}_{b,0}$ for $\extr{\fjmp{b} \conc S}$ and 
$\extr{S}_{b,e}$, where $e > 0$, for 
$\extr{\fjmp{b} \conc S \conc \sigma(e)}$ where, 
for each $n > 0$, $\sigma(n)$ is defined by induction on $n$ as follows: 
$\sigma(1) = \halt$ and $\sigma(n + 1) = \fjmp{0} \conc \sigma(n)$.
In the case where $b \leq \len(S) \leq \omega$ and $e > 0$, the thread 
denoted by $\extr{S}_{b,e}$ represents the behaviour that differs from 
the behaviour produced by the instruction sequence segment $S$ in 
isolation if execution enters the segment at its $b$th instruction only 
by terminating instead of becoming inactive if execution exits the 
segment by going to the $e$th instruction following it.
This adaptation of the behaviour is a technicality by which it is 
possible to obtain the state at the time that execution exits the 
segment by means of the apply operation $\sfapply$.

An asserted instruction sequence $\assinseq{b}{P}{S}{e}{Q}$ \emph{holds} 
in $\ServAlg$, written $\ServAlg \Sat \assinseq{b}{P}{S}{e}{Q}$, 
if $b \leq \len(S)$ and for all closed terms $t$ and $t'$ of sort 
$\ServFam$ that are state representing terms for $P$, $Q$, and $S$ with 
respect to $\ServAlg$:
\begin{center}
$\ServAlg \Sat P[t]$ implies 
$\ISM{\ServAlg} \Sat \extr{S}_{b,e'} \sfapply t = \emptysf$
for all $e' \in \Nat$ with $e \neq e'$ \\
and \\ 
$\ServAlg \Sat P[t]$ and 
$\ISM{\ServAlg} \Sat \extr{S}_{b,e} \sfapply t = t'$ imply 
$\ServAlg \Sat Q[t']$, 
\end{center}
where $\ISM{\ServAlg}$ is the model of the combination of \PGA, \BTA, 
and \SFA\ extended with the thread extraction operator, the apply 
operator, and the axioms for these operators such that the restrictions 
to the signatures of  \PGA, \BTA, and \SFA\ are the initial model of 
\PGA, the projective limit model of \BTA, and the free extension of 
$\ServAlg$ to a model of \SFA, respectively.
The existence of such a model follows from the fact that the signatures 
of \PGA, \BTA, and \SFA\ are disjoint by the amalgamation result about 
expansions presented as Theorem~6.1.1 in~\cite{Hod93a} (adapted to the 
many-sorted case).
The occurrences of $\ServAlg$ in the above definition can be replaced by 
$\ISM{\ServAlg}$.

Notice that 
for all $S \in \CTerm{\InSeq}$, $Q \in \Lang{\ServAlg}$, 
$b \in \Natpos$ with $b \leq \len(S)$, and $e \in \Nat$, 
$\ServAlg \Sat \assinseq{b}{\Lfalse}{S}{e}{Q}$.
However, there exist $S \in \CTerm{\InSeq}$, $P \in \Lang{\ServAlg}$, 
$b \in \Natpos$ with $b \leq \len(S)$, and $e \in \Nat$ such that 
$\ServAlg \not\Sat \assinseq{b}{P}{S}{e}{\Ltrue}$.
This is the case because, if execution enters the instruction sequence 
segment $S$ at its $b$th instruction and $P$ holds when execution enters 
$S$, then there may be no unique way in which execution exits $S$ and, 
if there is a unique way, it may be by going to another than the $e$th 
instruction following $S$.

We could have dealt with the above-mentioned non-uniqueness by 
supporting multiple exit points in asserted instruction sequences.
In that case, we would have asserted instruction sequences of the form 
$\assinseq{b}{P}{S}{e_1,\ldots,e_n}{Q}$ satisfying 
$\ServAlg \Sat \assinseq{b}{P}{S}{e_1,\ldots,e_n}{Q}$ iff
$\ServAlg \Sat \assinseq{b}{P}{S}{e_i}{Q}$ 
for all $i \in \set{1,\ldots,n}$.
This means that it is sufficient to add to the axioms and rules of 
inference of our Hoare-like logic (introduced below) the rules of 
inference corresponding to this equivalence.
These additional rules are such that nothing gets lost if 
$\assinseq{b}{P}{S}{e_1,\ldots,e_n}{Q}$ is simply considered a shorthand 
for the set 
$\set{\assinseq{b}{P}{S}{e_i}{Q} \where i \in \set{1,\ldots,n}}$ 
of asserted instruction sequences.

The axioms and rules of inference of our Hoare-like logic of asserted 
single-pass instruction sequences are given in 
Table~\ref{axioms-rules-hoare-logic}.
\begin{table}[!p]
\caption{Hoare-Like Logic of Asserted Single-Pass Instruction Sequences}
\label{axioms-rules-hoare-logic}
\tblsize \centering \vspace*{-2ex}
\begin{minipage}[t]{4.675in} 
\hrulefill 
\\[.3ex]
\textsc{Basic Instruction Axioms:}
\vspace*{-1.9ex}
\begin{lldispl}
\Axiom{A1}
      {\assinseq{1}{\sreply{m}(f) \neq \Div \Land 
                    P\subst{\derive{m}(f)}{f}}{f.m}{1}{P}}
\\
\Axiom{A2}{\assinseq{1}{\sreply{m}(f) = \Div}{f.m}{0}{\Lfalse}}
\vspace*{-1.9ex}
\end{lldispl}%
\textsc{Positive Test Instruction Axioms:}
\vspace*{-1.9ex}
\begin{lldispl}
\Axiom{A3}
      {\assinseq{1}{\sreply{m}(f) = \True \Land 
                    P\subst{\derive{m}(f)}{f}}{\ptst{f.m}}{1}{P}}
\\
\Axiom{A4}
      {\assinseq{1}{\sreply{m}(f) = \False \Land 
                    P\subst{\derive{m}(f)}{f}}{\ptst{f.m}}{2}{P}}
\\
\Axiom{A5}{\assinseq{1}{\sreply{m}(f) = \Div}{\ptst{f.m}}{0}{\Lfalse}}
\vspace*{-1.9ex}
\end{lldispl}%
\textsc{Negative Test Instruction Axioms:}
\vspace*{-1.9ex}
\begin{lldispl}
\Axiom{A6}
      {\assinseq{1}{\sreply{m}(f) = \True \Land 
                    P\subst{\derive{m}(f)}{f}}{\ntst{f.m}}{2}{P}}
\\
\Axiom{A7}
      {\assinseq{1}{\sreply{m}(f) = \False \Land 
                    P\subst{\derive{m}(f)}{f}}{\ntst{f.m}}{1}{P}}
\\
\Axiom{A8}{\assinseq{1}{\sreply{m}(f) = \Div}{\ntst{f.m}}{0}{\Lfalse}}
\vspace*{-1.9ex}
\end{lldispl}%
\textsc{Forward Jump Instruction Axioms:}
\vspace*{-1.9ex}
\begin{lldispl}
\Axiom{A9}{\assinseq{1}{P}{\fjmp{i{+}1}}{i{+}1}{P}}
\qquad
\Axiom{A10}{\assinseq{1}{\Ltrue}{\fjmp{0}}{0}{\Lfalse}}
\vspace*{-1.9ex}
\end{lldispl}%
\textsc{Termination Instruction Axiom:}
\vspace*{-1.9ex}
\begin{lldispl}
\Axiom{A11}{\assinseq{1}{P}{\halt}{0}{P}}
\vspace*{-1.9ex}
\end{lldispl}%
\textsc{Concatenation Rules:}
\vspace*{-1.9ex}
\begin{lldispl}
\RuleC{R1}{\assinseq{b}{P}{S_1}{i}{Q}, \; \assinseq{i}{Q}{S_2}{e}{R}}
      {\assinseq{b}{P}{S_1 \conc S_2}{e}{R}}{i>0}
\\
\RuleC{R2}{\assinseq{b}{P}{S_1}{e{+}\len(S_2)}{Q}}
      {\assinseq{b}{P}{S_1 \conc S_2}{e}{Q}}{e > 0}
\qquad
\Rule{R3}{\assinseq{b}{P}{S_1}{0}{Q}}{\assinseq{b}{P}{S_1 \conc S_2}{0}{Q}}
\\
\Rule{R4}{\assinseq{b}{P}{S_2}{e}{Q}}
     {\assinseq{b{+}\len(S_1)}{P}{S_1 \conc S_2}{e}{Q}}
\vspace*{-1.9ex}
\end{lldispl}%
\textsc{Repetition Rule} (for each $k,n > 0$ with $k \leq n$):
\vspace*{-1.9ex}
\begin{lldispl}
\RuleS{R5}
         {\begin{array}[b]{@{}c@{}}
          \assinseq{b_1}{P_1}{S\rep}{0}{Q_1}, \ldots, 
          \assinseq{b_n}{P_n}{S\rep}{0}{Q_n}
           \Entr \assinseq{b_1}{P_1}{S \conc S\rep}{0}{Q_1}
          \\ \vdots \\
          \assinseq{b_1}{P_1}{S\rep}{0}{Q_1}, \ldots, 
          \assinseq{b_n}{P_n}{S\rep}{0}{Q_n}
           \Entr \assinseq{b_n}{P_n}{S \conc S\rep}{0}{Q_n}
          \end{array}}
     {\assinseq{b_k}{P_k}{S\rep}{0}{Q_k}}{4.5}
\vspace*{-1.9ex}
\end{lldispl}%
\textsc{Alternatives Rule:}
\vspace*{-1.9ex}
\begin{lldispl}
\Rule{R6}{\assinseq{b}{P}{S}{e}{R},\; \assinseq{b}{Q}{S}{e}{R}}
     {\assinseq{b}{P \Lor Q}{S}{e}{R}}
\vspace*{-1.9ex}
\end{lldispl}%
\textsc{Invariance Rule:}
\vspace*{-1.9ex}
\begin{lldispl}
\RuleC{R7}{\assinseq{b}{P}{S}{e}{Q}}
      {\assinseq{b}{P \Land R}{S}{e}{Q \Land R}}
      {\vars(R) \inter \vars(S) = \emptyset}
\vspace*{-1.9ex}
\end{lldispl}%
\textsc{Elimination Rule:}
\vspace*{-1.9ex}
\begin{lldispl}
\RuleC{R8}{\assinseq{b}{P}{S}{e}{Q}}
      {\assinseq{b}{\Exists{x}{P}}{S}{e}{Q}}
      {\set{x} \inter (\vars(S) \union \vars(Q)) = \emptyset}
\vspace*{-1.9ex}
\end{lldispl}%
\textsc{Substitution Rule:}
\vspace*{-1.9ex}
\begin{lldispl}
\RuleC{R9}{\assinseq{b}{P}{S}{e}{Q}}
      {\assinseq{b}{P\subst{y}{x}}{S}{e}{Q\subst{y}{x}}}
      {\set{x} \inter \vars(S) = \emptyset,\; 
       \set{y} \inter \vars(S) = \emptyset}
\vspace*{-1.9ex}
\end{lldispl}%
\textsc{Consequence Rule:}
\vspace*{-1.9ex}
\begin{lldispl}
\Rule{R10}{P \Limpl P',\; \assinseq{b}{P'}{S}{e}{Q'},\; Q' \Limpl Q}
     {\assinseq{b}{P}{S}{e}{Q}}
\vspace*{-2.7ex}
\end{lldispl}%
\hrulefill 
\end{minipage}
\end{table}
In this table, 
$S,S_1,S_2$ stand for arbitrary closed terms from $\CTerm{\InSeq}$, 
$P,P',P_1,P_2,\ldots$, $Q,Q',Q_1,Q_2,\ldots$, and $R$ stand for 
arbitrary formulas from $\Lang{\ServAlg}$, 
$b,b_1,b_2,\ldots$ stand for arbitrary positive natural numbers, 
$e,i$ stand for arbitrary natural numbers, 
$x,y$ stand for arbitrary variables of some sort in $\Sig{\ServAlg}$, 
$f$ stands for an arbitrary focus from $\Foci$, and 
$m$ stands for an arbitrary method from $\Meth$. 
Moreover, 
$\vars(P)$ denotes the set all foci that belong to the free variables of 
$P$ and $\vars(S)$ denotes the set of all foci that occur in $S$. 
We write $\Psi \Entr \phi$, where $\Psi$ is a finite set of asserted 
instruction sequences and $\phi$ is an asserted instruction sequence, 
for provability of $\phi$ from $\Psi$ without applications of the 
repetition \nolinebreak[2] rule (R5).

The axioms concern the smallest instruction sequence segments, namely
single instructions. 
Axioms A1--A8 are similar to the assignment axiom found in most Hoare 
logics.
They are somewhat more complicated than the assignment axiom because 
they concern instructions that may cause execution to become inactive 
and, in case of axioms A3--A8, instructions that have two exit points.
Axioms A9--A11, which concern jump instructions and the termination
instruction, are very simple and speak for themselves.

Concatenation needs four rules because instruction sequence segments may 
be prefixed or suffixed by redundant instruction sequence segments in 
several ways.
Rule R1 concerns the obvious case, namely the case where execution 
enters the whole by entering the first instruction sequence segment and 
execution exits the whole by exiting the second instruction sequence 
segment.
Rule R2 concerns the case where execution exits the whole by exiting the
first instruction sequence segment.
Rule R3 concerns the case where execution becomes inactive or terminates
in the whole by doing so in the first instruction sequence segment.
Rule R4 concerns the case where execution enters the whole by entering 
the second instruction sequence~segment.

The repetition rule (rule R5) is reminiscent of the recursion rule found 
in Hoare logics for high-level programming languages that covers calls 
of (parameterless) recursive procedures (see e.g.~\cite{Apt81a}).
This rule is actually a rule schema: there is an instance of this rule 
for each $k,n > 0$ with $k \leq n$.
In many cases, the instance for $k = 1$ and $n = 1$ suffices.
The need for the rules R6--R9 is not clear at first sight, but without
them the presented formal system would be incomplete.
Although these rules do not explicitly deal with repetition, they would 
not be needed for completeness in the absence of repetition.

The consequence rule (rule R10) is found in one form or another in 
all Hoare logics and Hoare-like logics.
This rule allows to make use of formulas from $\Lang{\ServAlg}$ that 
hold in $\ServAlg$ to strengthen pre-conditions and weaken 
post-conditions.

Because there is no rule of inference to deal with nested repetitions,
it seems at first sight that we cannot have a completeness result for
the presented Hoare-like logic.
However, a closer look at this matter yields something different.
The crux is that the following rule of inference is derivable from
rules R3 and R5:
\begin{ldispl}
\RuleU{\assinseq{b}{P}{S}{0}{Q}}{\assinseq{b}{P}{S\rep}{0}{Q}}\;.
\end{ldispl}%
We have the following result:
\begin{theorem}
\label{theorem-nested-rep}
Let $\Th{\ServAlg}$ be the set of all formulas of $\Lang{\ServAlg}$ that 
hold in $\ServAlg$.
Then, for each $S \in \CTerm{\InSeq}$, $P,Q \in \Lang{\ServAlg}$, 
and $b \in \Natpos$, $\ServAlg \Sat \assinseq{b}{P}{S}{0}{Q}$ only if
there exists an $S' \in \CTerm{\InSeq}$ in which the repetition operator
occurs at most once such that 
(a)~$\ServAlg \Sat \assinseq{b}{P}{S'}{0}{Q}$ and 
(b)~$\Th{\ServAlg} \Ent \assinseq{b}{P}{S'}{0}{Q}$ implies
$\Th{\ServAlg} \Ent \assinseq{b}{P}{S}{0}{Q}$.
\end{theorem}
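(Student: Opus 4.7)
The plan is to take $S'$ to be a canonical normal form of $S$ with at most one occurrence of $\rep$. In the initial \PGA\ model every closed term denotes either a finite instruction sequence, representable as $u_1 \conc \cdots \conc u_n$ with no $\rep$, or an eventually periodic infinite one, representable as $u_1 \conc \cdots \conc u_n \conc (v_1 \conc \cdots \conc v_m)\rep$ with a single $\rep$. Letting $S'$ be such a normal form of $S$ (adjusted, if needed, to have exactly the same set of foci as $S$ by appending redundant primitives after the $\rep$, which \PGA3 absorbs), the identity $S = S'$ holds in the initial \PGA\ model and hence in $\ISM{\ServAlg}$, and is provable from \PGA1--\PGA4 by the standard completeness result for those axioms.

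Property (a) is then immediate. Since $S = S'$ in $\ISM{\ServAlg}$ we have $\extr{S}_{b,e'} = \extr{S'}_{b,e'}$ for every $e' \in \Nat$ and $\len(S) = \len(S')$; moreover, our choice of $S'$ makes the state-representing terms for $S$ and for $S'$ coincide. The semantic clause defining $\ServAlg \Sat \assinseq{b}{P}{\cdot}{0}{Q}$ uses its instruction-sequence argument only through these ingredients, so $\ServAlg \Sat \assinseq{b}{P}{S}{0}{Q} \Liff \ServAlg \Sat \assinseq{b}{P}{S'}{0}{Q}$, and (a) follows from the theorem's hypothesis.

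For (b) I would prove the stronger lemma that whenever $T = T'$ is derivable from \PGA1--\PGA4, $\Th{\ServAlg} \Ent \assinseq{b}{P}{T}{0}{Q}$ iff $\Th{\ServAlg} \Ent \assinseq{b}{P}{T'}{0}{Q}$, by induction on the \PGA\ derivation. \PGA1 (associativity of $\conc$) is handled by the flexibility of R1--R4 for re-associating concatenations. \PGA3 ($X\rep \conc Y = X\rep$) is handled in the left-to-right direction by R3 directly, and in the right-to-left direction by the derivable rule which takes $\assinseq{b}{P}{U}{0}{Q}$ to $\assinseq{b}{P}{U\rep}{0}{Q}$ that the paper extracts from R3 and R5 just before the theorem. \PGA2 reduces to combinations of these.

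The main obstacle is \PGA4 ($(X \conc Y)\rep = X \conc (Y \conc X)\rep$) together with lifting any \PGA\ equality through an arbitrary term context $C[\cdot]$. Since the Hoare-like logic has no built-in congruence rule for \PGA\ equality, each contextual step must be carried out by decomposing the given derivation along $C$ via R1--R4 --- using intermediate assertions either read off from the original derivation's R1-applications or supplied via the consequence rule R10 --- and then reassembling it around the transformed sub-term. For \PGA4 itself, the premises of R5 must be instantiated with invariants matching the two distinct periodic structures naturally induced by $(X \conc Y)\rep$ and by $X \conc (Y \conc X)\rep$. Carrying out this case analysis is technical but essentially routine once the right intermediate assertions are pinned down.
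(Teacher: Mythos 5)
Your proposal takes a genuinely different route from the paper, and the route has a gap exactly where the work is. The paper does not normalize $S$ and does not transfer derivations across \PGA-provable equality at all. It exploits the fact that the exit index is $0$ and that any term containing the repetition operator denotes an infinite sequence, so that four \emph{one-directional} facts suffice: for $S$ containing a repetition, $\ServAlg \Sat \assinseq{b}{P}{S \conc T}{0}{Q}$ implies $\ServAlg \Sat \assinseq{b}{P}{S}{0}{Q}$ and $\ServAlg \Sat \assinseq{b}{P}{S\rep}{0}{Q}$ implies $\ServAlg \Sat \assinseq{b}{P}{S}{0}{Q}$, while the converse implications for $\Ent$ are immediate from R3 and from the rule derived from R3 and R5. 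Induction on the number of occurrences of the repetition operator then produces an $S'$ satisfying precisely the one-directional claims (a) and (b); no congruence of $\Ent$ under \PGA\ equality is ever needed. Your parts that are sound --- the existence of a provably equal normal form with at most one repetition (initiality of the model, Lemma~2.6 of~\cite{BM12b}), the care about foci, and claim (a) --- are the easy parts.

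The gap is your ``stronger lemma'' that $\Th{\ServAlg} \Ent \assinseq{b}{P}{T}{0}{Q}$ is invariant under \PGA-provable equality of $T$: this is not a routine technicality, and even the cases you claim to dispatch are not dispatched. For PGA3, R3 gives provability for $X\rep \conc Y$ \emph{from} provability for $X\rep$, and the derived rule gives provability for $U\rep$ from provability for $U$; neither yields provability for $X\rep$ from provability for $X\rep \conc Y$, which is a proof-theoretic inversion the system does not supply, and an analogous problem arises for PGA2. For PGA4 and for closure under arbitrary contexts you would have to transform R5-derivations whose hypotheses concern $(X \conc Y) \conc (X \conc Y)\rep$ into derivations concerning $X \conc (Y \conc X)\rep$; doing this syntactically, without appealing to the completeness theorem (which you cannot use, since this theorem is a stepping stone toward it), is at least as hard as the completeness proof itself. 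You are also proving more than needed --- an equivalence of provability, where the theorem only asks for the direction from $S'$ to $S$ --- and it is precisely by settling for that one direction that the paper's argument stays elementary.
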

\begin{proof}
Let $S \in \CTerm{\InSeq}$ be such that the repetition operator
occurs at least once in $S$.
Then the following properties follow directly from the definitions 
involved ((1) and (2)) and the presented Hoare-like logic
((3) and (4)):
\begin{enumerate}
\item[(1)]
$\ServAlg \Sat \assinseq{b}{P}{S \conc T}{0}{Q}$ implies
$\ServAlg \Sat \assinseq{b}{P}{S}{0}{Q}$;
\item[(2)]
$\ServAlg \Sat \assinseq{b}{P}{S\rep}{0}{Q}$ implies
$\ServAlg \Sat \assinseq{b}{P}{S}{0}{Q}$;
\item[(3)]
$\Th{\ServAlg} \Ent \assinseq{b}{P}{S}{0}{Q}$ implies
$\Th{\ServAlg} \Ent \assinseq{b}{P}{S \conc T}{0}{Q}$;
\item[(4)]
$\Th{\ServAlg} \Ent \assinseq{b}{P}{S}{0}{Q}$ implies
$\Th{\ServAlg} \Ent \assinseq{b}{P}{S\rep}{0}{Q}$.
\end{enumerate}
Using these properties, the theorem is easily proved by induction on the 
number of occurrences of the repetition operator in $S$.
\qed
\end{proof}
As a corollary of Theorem~\ref{theorem-nested-rep} we have that a
completeness result for the set of all closed \PGA\ terms of sort 
$\InSeq$ in which the repetition operator occurs at most once entails a
completeness result for the set of all closed \PGA\ terms of sort 
$\InSeq$.

\section{Example}
\label{sect-example}

In this section, we give an example of the use of the Hoare-like logic 
of asserted single-pass instruction sequences presented in 
Section~\ref{sect-HL-PC}.
The example has only been chosen because it is simple and shows 
applications of most axioms and rules of inference of this Hoare-like 
logic (including R6 and R8).

For $\ServAlg$, we take an algebra of services that make up unbounded
natural number counters.
Each natural number counter service is able to process methods to 
increment the content of the counter by one ($\incr$), to decrement the 
content of the counter by one ($\decr$), and to test whether the content 
of the counter is zero ($\iszero$).
The derived service and service reply operations for these methods are as
to be expected.
$\Sig{\ServAlg}$ includes the sort $\NN$ of natural numbers, the constant 
$\const{0}{\NN}$, and the unary operators $\funct{\nnsucc}{\NN}{\NN}$, 
$\funct{\nnpred}{\NN}{\NN}$, and $\funct{\nnc}{\NN}{\Serv}$.
The interpretation of $\NN$, $0$, $\nnsucc$, and $\nnpred$ are as to be 
expected.
The interpretation of $\nnc$ is the function that maps each natural 
number $n$ to the service that makes up a counter whose content is $n$.

We claim that the closed \PGA\ term
$(\ntst{\nncvar.\iszero} \conc \fjmp{2} \conc \halt \conc
 \nncvar.\decr)\rep$ 
denotes an instruction sequence for setting the counter made up by 
service $\nncvar$ to zero.
That is, we claim
$\assinseq{1}{\Ltrue}
          {(\ntst{\nncvar.\iszero} \conc \fjmp{2} \conc \halt \conc
            \nncvar.\decr)\rep}
          {0}{\nncvar = \nnc(0)}$.
We prove this by means of the axioms and rules of inference given in
Table~\ref{axioms-rules-hoare-logic}.

It is sufficient to prove
\begin{enumerate}
\item[(1)]
$\assinseq{1}{\nncvar = \nnc(0) \Lor \nncvar = \nnc(n + 1)}
          {(\ntst{\nncvar.\iszero} \conc \fjmp{2} \conc \halt \conc
            \nncvar.\decr)\rep}
          {0}{\nncvar = \nnc(0)}$
\end{enumerate}
because the claim follows from (1) by R8 and R10.

\noindent
First, we prove 
$\assinseq{1}{\nncvar = \nnc(0)}
          {\ntst{\nncvar.\iszero} \conc \fjmp{2} \conc \halt \conc
           \nncvar.\decr}
          {0}{\nncvar = \nnc(0)}$:
\begin{enumerate}
\item[(2)]
$\assinseq{1}{\nncvar = \nnc(0)}{\ntst{\nncvar.\iszero}}
          {2}{\nncvar = \nnc(0)}$ \\ \hsp{2}
by A6;
\item[(3)]
$\assinseq{1}{\nncvar = \nnc(0)}
          {\ntst{\nncvar.\iszero} \conc \fjmp{2}}
          {1}{\nncvar = \nnc(0)}$ \\ \hsp{2}
from~(2) by A9 and R2;
\item[(4)]
$\assinseq{1}{\nncvar = \nnc(0)}
          {\ntst{\nncvar.\iszero} \conc \fjmp{2} \conc \halt}
          {0}{\nncvar = \nnc(0)}$ \\ \hsp{2}
from~(3) by A11 and R1;
\item[(5)]
$\assinseq{1}{\nncvar = \nnc(0)}
          {\ntst{\nncvar.\iszero} \conc \fjmp{2} \conc \halt \conc
           \nncvar.\decr}
          {0}{\nncvar = \nnc(0)}$ \\ \hsp{2}
from~(4) by A1~and R3.
\end{enumerate}
Next, we prove
$\assinseq{1}{\nncvar = \nnc(n + 1)}
          {\ntst{\nncvar.\iszero} \conc \fjmp{2} \conc \halt \conc
           \nncvar.\decr}
          {0}{\nncvar = \nnc(n)}$:
\begin{enumerate}
\item[(6)]
$\assinseq{1}{\nncvar = \nnc(n + 1)}{\ntst{\nncvar.\iszero}}
          {1}{\nncvar = \nnc(n + 1)}$ \\ \hsp{2}
by A6;
\item[(7)]
$\assinseq{1}{\nncvar = \nnc(n + 1)}
          {\ntst{\nncvar.\iszero} \conc \fjmp{2}}
          {2}{\nncvar = \nnc(n + 1)}$ \\ \hsp{2}
from~(6) by A9 and R1;
\item[(8)]
$\assinseq{1}{\nncvar = \nnc(n + 1)}
          {\ntst{\nncvar.\iszero} \conc \fjmp{2} \conc \halt}
          {1}{\nncvar = \nnc(n + 1)}$ \\ \hsp{2}
from~(7) by A11 and R2;
\item[(9)]
$\assinseq{1}{\nncvar = \nnc(n + 1)}
          {\ntst{\nncvar.\iszero} \conc \fjmp{2} \conc \halt \conc
           \nncvar.\decr}
          {0}{\nncvar = \nnc(n)}$ \\ \hsp{2}
from~(8) by A1, R10 and R1.
\end{enumerate}
Assuming (1), we prove 
\begin{enumerate}
\item[]
$\lassinseq{1}{\nncvar = \nnc(0) \Lor \nncvar = \nnc(n + 1)}
          {\ntst{\nncvar.\iszero} \conc \fjmp{2} \conc \halt \conc
           \nncvar.\decr \conc
           (\ntst{\nncvar.\iszero} \conc \fjmp{2} \conc \halt \conc
            \nncvar.\decr)\rep}
          {0}{\nncvar = \nnc(0)}$:
\end{enumerate}
\begin{enumerate}
\item[(a)]
$\lassinseq{1}{\nncvar = \nnc(0)}
          {\ntst{\nncvar.\iszero} \conc \fjmp{2} \conc \halt \conc
            \nncvar.\decr \conc 
           (\ntst{\nncvar.\iszero} \conc \fjmp{2} \conc \halt \conc
            \nncvar.\decr)\rep}
          {0}{\nncvar = \nnc(0)}$ \\ \hsp{2}
from~(5) by R3;
\item[(b)]
$\lassinseq{1}{\nncvar = \nnc(n + 1)}
          {\ntst{\nncvar.\iszero} \conc \fjmp{2} \conc \halt \conc
           \nncvar.\decr \conc
           (\ntst{\nncvar.\iszero} \conc \fjmp{2} \conc \halt \conc
            \nncvar.\decr)\rep}
          {0}{\nncvar = \nnc(0)}$ \\ \hsp{2}
from~(9) by R1;
\item[(c)]
$\lassinseq{1}{\nncvar = \nnc(0) \Lor \nncvar = \nnc(n + 1)}
          {\ntst{\nncvar.\iszero} \conc \fjmp{2} \conc \halt \conc
           \nncvar.\decr \conc
           (\ntst{\nncvar.\iszero} \conc \fjmp{2} \conc \halt \conc
            \nncvar.\decr)\rep}
          {0}{\nncvar = \nnc(0)}$ \\ \hsp{2}
from~(a) and~(b) by R6.
\end{enumerate} 
Because (c) has been derived assuming (1), (1) now follows by R5.

The example given above illustrates that proving instruction sequences
correct can be quite tedious, even in a simple case. 
This can be largely attributed to the fact that instruction sequences 
do not need to be structured programs and not to the particular 
Hoare-like logic used.
A verification condition generator and a proof assistant are anyhow 
indispensable when proving realistic instruction sequences correct.

\section{Soundness and Completeness}
\label{sect-sound-complete}

This section is concerned with the soundness and completeness of the 
Hoare-like logic of asserted single-pass instruction sequences 
presented in Section~\ref{sect-HL-PC}.
It was assumed in Section~\ref{sect-HL-PC} that a signature 
$\Sig{\ServAlg}$ that includes specific sorts, constants and operators 
and a minimal $\Sig{\ServAlg}$-algebra $\ServAlg$ that satisfies 
specific conditions had been given.
In this section, we intend to establish soundness and completeness for 
all algebras that could have been given.
It is useful to introduce a name for these algebras: 
\emph{service algebras}.

In this section, we write $\Th{\ServAlg}$, where $\ServAlg$ is a service 
algebra, for the set of all formulas of $\Lang{\ServAlg}$ that hold in 
$\ServAlg$.

The proof of the soundness theorem for the presented Hoare-like logic 
given below (Theorem~\ref{theorem-soundness}) will make use of the 
following two lemmas.
Recall that $\Entr$ stands for provability without applications of the 
repetition rule.
\begin{lemma}
\label{lemma-replacement}
Let $\ServAlg$ be a service algebra, and
let $k,n \in \Natpos$ be such that $k \leq n$.
Then, for each $S,S' \in \CTerm{\InSeq}$, 
$P_1,\ldots,P_n,Q_1,\ldots,Q_n \in \Lang{\ServAlg}$, and 
$b_1,\ldots,b_n \in \Natpos$, if
$\assinseq{b_1}{P_1}{S\rep}{0}{Q_1}, \ldots, 
 \assinseq{b_n}{P_n}{S\rep}{0}{Q_n} \Entr 
 \assinseq{b_k}{P_k}{S \conc S\rep}{0}{Q_k}$
then
$\assinseq{b_1}{P_1}{S'}{0}{Q_1}, \ldots, 
 \assinseq{b_n}{P_n}{S'}{0}{Q_n} \Entr 
 \assinseq{b_k}{P_k}{S \conc S'}{0}{Q_k}$.
\end{lemma}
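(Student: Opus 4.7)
The plan is to prove a strengthening of the statement by induction on the structure of the derivation witnessing $\Psi \Entr \assinseq{b_k}{P_k}{S \conc S\rep}{0}{Q_k}$ and then specialize. A single-step induction focused only on the final assertion will not go through, because the derivation traverses many intermediate asserted instruction sequences whose instruction sequence terms mention $S\rep$ in varying syntactic positions. So I would state and prove the following generalization: for every $T \in \CTerm{\InSeq}$, every $P,Q \in \Lang{\ServAlg}$, every $b \in \Natpos$, and every $e \in \Nat$, if $\Psi \Entr \assinseq{b}{P}{T}{e}{Q}$ then $\Psi' \Entr \assinseq{b}{P}{T^{*}}{e}{Q}$, where $T^{*}$ denotes the term obtained from $T$ by uniformly replacing every syntactic occurrence of $S\rep$ by $S'$, and $\Psi'$ is the corresponding substitution of the hypotheses. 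The lemma follows by taking $T = S \conc S\rep$, since $S\rep$ does not occur in $S$ and so $(S \conc S\rep)^{*} = S \conc S'$.

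The induction has two straightforward base cases. If $\assinseq{b}{P}{T}{e}{Q}$ is an axiom among A1--A11, then $T$ is a single primitive instruction not containing $S\rep$, so $T^{*} = T$ and the same axiom is available. If it is one of the hypotheses, say $\assinseq{b_i}{P_i}{S\rep}{0}{Q_i}$, then $T = S\rep$, so $T^{*} = S'$ and $\assinseq{b_i}{P_i}{S'}{0}{Q_i} \in \Psi'$ is a hypothesis. For the inductive cases, the concatenation rules R1--R4 are handled by the fact that the substitution $(\cdot)^{*}$ distributes over $\conc$, so $(T_1 \conc T_2)^{*} = T_1^{*} \conc T_2^{*}$; the induction hypothesis applied to each premise gives exactly what is needed to re-apply the same rule. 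The alternatives rule R6 and the consequence rule R10 leave the instruction sequence untouched and transfer immediately. The fact that R5 is not used in the original derivation is preserved throughout the construction, so the transformed derivation is also an $\Entr$-derivation.

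The main obstacle will be the rules R7, R8, and R9, whose side conditions mention $\vars(S)$ for $S$ the instruction sequence of the conclusion. After substitution the instruction sequence changes from $T$ to $T^{*}$, and since $\vars(T^{*}) \subseteq \vars(T) \cup \vars(S')$, a side condition of the form $\vars(R) \inter \vars(T) = \emptyset$ no longer automatically gives $\vars(R) \inter \vars(T^{*}) = \emptyset$ unless $\vars(R) \inter \vars(S') = \emptyset$. I would address this by a preprocessing pass through the derivation: at each invariance, elimination, or substitution step, use R9 to rename any free variable of the auxiliary formula that lies in $\vars(S')$ to a fresh variable lying outside $\vars(T) \cup \vars(S')$. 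The availability of infinitely many variables of each sort in $\Sig{\ServAlg}$ ensures such fresh variables exist, and the original side conditions guarantee that the renaming preconditions of R9 are met at the point where it is inserted. With this preprocessing in place, the substituted side conditions hold and the inductive step for R7, R8, and R9 goes through in the same manner as the other rules.
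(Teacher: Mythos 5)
Your overall strategy---induction on the $\Entr$-derivation, generalized to a statement about uniformly substituting $S'$ for $S\rep$ in an arbitrary intermediate term $T$---is exactly what the paper's one-sentence proof sketch intends, and your handling of the axioms, the hypotheses, and rules R1, R3, R6 and R10 is correct. Two remarks on the concatenation rules: R2 and R4 mention $\len(S_2)$ and $\len(S_1)$, and these lengths could in principle change under the substitution; they do not, because those rules are only applicable when the subterm whose length is taken denotes a finite instruction sequence, hence contains no repetition operator, hence contains no occurrence of $S\rep$ and is left unchanged. This is worth stating, but it is not a gap.

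The genuine gap is in your repair of the R7/R8/R9 cases. Renaming a focus $x \in \vars(S')$ occurring in the auxiliary formula to a fresh $y$ changes the conclusion of that step: $P \Land R$ becomes $P \Land R\subst{y}{x}$, and $\Exists{x}{P}$ becomes $\Exists{y}{P\subst{y}{x}}$. The latter can be undone locally by R10 (alpha-equivalence), but the renaming step that produces it, and the R9 step needed to restore $R$ from $R\subst{y}{x}$ before the rest of the derivation can continue, is itself an application of R9 whose side condition after the substitution reads $\set{x} \inter \vars(T^{*}) = \emptyset$---and this fails precisely because $x \in \vars(S') \subseteq \vars(T^{*})$ whenever $S\rep$ occurs in $T$. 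So the preprocessing does not survive the substitution: the obstacle it was meant to remove reappears one inference later, and propagating the renaming globally instead is blocked because $x$ may occur free in the fixed $P_i$, $Q_i$ of the hypotheses and conclusion. The difficulty is not merely syntactic: when $x \in \vars(T^{*})$, the passage from $\assinseq{b}{P}{T^{*}}{e}{Q}$ to $\assinseq{b}{\Exists{x}{P}}{T^{*}}{e}{Q}$ is semantically unsound, since the behaviour of $T^{*}$ then depends on the service named $x$. What does work, and is all the paper ever uses (the corollary instantiates $S'$ to $S^{i}$, respectively $S^{i} \conc \fjmp{0}^{b}$), is the case $\vars(S') \subseteq \vars(S\rep)$: there $\vars(T^{*}) \subseteq \vars(T)$, so every side condition of the form ``$X \inter \vars(T) = \emptyset$'' transfers verbatim and no preprocessing is needed. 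You should either add that hypothesis on $S'$ (noting it suffices for the intended application) or find a genuinely different argument for arbitrary $S'$; as it stands the R7--R9 cases of your induction do not go through.
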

\begin{proof}
This is easily proved by induction on the length of proofs, case 
distinction on the axiom applied in the basis step, and case distinction 
on the rule of inference last applied in the inductive step.
\qed
\end{proof}
An  important corollary of Lemma~\ref{lemma-replacement} is that, 
for all $i \in \Nat$ and $k \in \Natpos$ with $k \leq n$,
$\assinseq{b_1}{P_1}{S\rep}{0}{Q_1}, \ldots,
 \assinseq{b_n}{P_n}{S\rep}{0}{Q_n} \Entr 
 \assinseq{b_k}{P_k}{S \conc S\rep}{0}{Q_k}$
only if
$\assinseq{b_1}{P_1}{S^i}{0}{Q_1}, \ldots,
 \assinseq{b_n}{P_n}{S^i}{0}{Q_n} \Entr 
 \assinseq{b_k}{P_k}{S^{i+1}}{0}{Q_k}$.
\begin{lemma}
\label{lemma-soundness}
For each service algebra $\ServAlg$, set of asserted instruction 
sequences $\Psi$, and asserted instruction sequence $\phi$, 
$\Th{\ServAlg} \union \Psi \Entr \phi$ only if $\ServAlg \Sat \psi$ for
all $\psi \in \Psi$ implies $\ServAlg \Sat \phi$.
\end{lemma}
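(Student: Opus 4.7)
The plan is to proceed by induction on the length of the derivation witnessing $\Th{\ServAlg} \union \Psi \Entr \phi$. Because $\Entr$ excludes applications of the repetition rule~R5, the only axioms and rules to consider are A1--A11 and R1--R4, R6--R10, together with the base case that $\phi$ is either an element of $\Psi$ (in which case the claim is immediate) or an element of $\Th{\ServAlg}$ used through R10. The overall shape is thus entirely local: each axiom must be shown to hold in every $\ServAlg$, and each rule must be shown to preserve semantic validity relative to a fixed $\ServAlg$.

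For the basis step, I would unfold the definition of $\ServAlg \Sat \assinseq{b}{P}{S}{e}{Q}$ for each axiom. For A1--A8, this boils down to computing $\extr{S}_{1,e'} \sfapply t$ for $S$ a single-instruction primitive of the appropriate kind and every $e' \in \Nat$, using the thread-extraction axioms of Table~\ref{axioms-thread-extr} and the apply axioms A1--A6 of Table~\ref{axioms-apply}. The case split on $\sreply{m}(f) \in \{\True,\False,\Div\}$ in the premise of each axiom matches exactly the case split in A4--A6 of the apply operator, so the resulting state $t'$ either equals $f.\derive{m}(f) \sfcomp \encap{\{f\}}(t)$ (whence $Q[t'] \Liff Q\subst{\derive{m}(f)}{f}[t]$) or $\extr{S}_{1,e'} \sfapply t = \emptysf$ for every $e'$ not equal to the axiom's exit index. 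Axioms A9--A11 are even simpler, as no basic action is performed.

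For the inductive step, I would treat each rule separately. The concatenation rules R1--R4 are the most delicate: here one has to exploit the fact that $\extr{\fjmp{b}\conc S_1\conc S_2\conc\sigma(e)}$ decomposes, according to the number of instructions of $S_1$ that the $b$th instruction is past, either into a behaviour that stays in $S_1$ (handled by R3 when $e=0$ and terminates, and by R2 when $e>0$ and $S_1$ exits by going to an instruction already past $S_2$) or into a behaviour that enters $S_2$ at its first instruction after leaving $S_1$ at instruction $i>0$ (R1), or into a behaviour that enters directly at an instruction inside $S_2$ (R4). In each case, the semantic effect of the intermediate state is given by the apply operator, and one stitches together the two hypotheses by appealing to associativity of apply composition implicit in axioms A1--A6. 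The alternatives rule R6 is immediate from the definition. R7 (invariance) uses the fact that when $\vars(R) \inter \vars(S) = \emptyset$, the focus-projection of any state $t'$ reachable from $t$ agrees with $t$ on $\vars(R)$, so $R[t]$ and $R[t']$ are equivalent. R8 and R9 are handled by standard first-order reasoning about substitution and elimination of quantifiers. R10 is trivial once one knows that $\ServAlg \Sat P \Limpl P'$ when $P \Limpl P' \in \Th{\ServAlg}$, which is precisely what the hypothesis $\Th{\ServAlg} \union \Psi \Entr \phi$ allows.

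The main obstacle will be making the concatenation rules rigorous, since their correctness relies on an informal but crucial ``stitching'' property of the behaviour extraction operator: that executing $S_1\conc S_2$ starting at its $b$th instruction and exiting by going $i$ steps past $S_1$ is the same as executing $S_2$ starting at its $i$th instruction, with the intermediate state obtained by applying $\extr{S_1}_{b,i}$. Rather than grind through this by induction on the structure of $S_1$, I would isolate it as a single auxiliary equation on $\sfapply$ (essentially, that $\extr{S_1\conc S_2}_{b,e'} \sfapply t$ equals $\extr{S_2}_{i,e'-\len(S_2)} \sfapply (\extr{S_1}_{b,i} \sfapply t)$ whenever the relevant exits occur) and derive each of R1--R4 as a corollary, keeping the soundness lemma itself clean.
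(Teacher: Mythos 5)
Your proposal matches the paper's proof, which consists of exactly one sentence: induction on the length of proofs, with case distinction on the axiom applied in the basis step and on the rule of inference last applied in the inductive step --- precisely the structure you lay out, and your elaboration of the individual cases is sound (the only slip is in your sketched stitching equation for concatenation, where the exit index of $S_2$ relative to $S_2$ should coincide with the exit index of $S_1 \conc S_2$ rather than be shifted by $\len(S_2)$; the shift by $\len(S_2)$ belongs to the case where execution exits from within $S_1$, as in rule R2). Nothing beyond your plan is needed, since the paper itself supplies no further detail.
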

\begin{proof}
This is easily proved by induction on the length of proofs, case 
distinction on the axiom applied in the basis step, and case distinction 
on the rule of inference last applied in the inductive step.
\qed
\end{proof}
Lemma~\ref{lemma-soundness} expresses that, if the repetition rule is 
dropped, the axioms and inference rules of the presented Hoare-like 
logic are strongly sound.

The following theorem is the soundness theorem for the presented 
Hoare-like logic.
\begin{theorem}
\label{theorem-soundness}
For each service algebra $\ServAlg$ and asserted instruction sequence 
$\phi$, $\Th{\ServAlg} \Ent \phi$ implies $\ServAlg \Sat \phi$. 
\end{theorem}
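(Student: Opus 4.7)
The plan is to proceed by induction on the number of applications of the repetition rule R5 in the given derivation of $\Th{\ServAlg} \Ent \phi$. When this number is zero the derivation is already a $\Entr$-derivation from $\Th{\ServAlg}$, so Lemma~\ref{lemma-soundness} applied with $\Psi = \emptyset$ immediately delivers $\ServAlg \Sat \phi$. When R5 is used, I would single out any R5 application whose immediate premises are $\Entr$-derivations (always possible, since by definition R5's premises are $\Entr$), prove that its conclusion $\assinseq{b_k}{P_k}{S\rep}{0}{Q_k}$ is semantically valid in $\ServAlg$, and then treat this semantic fact as one more truth propagated through the remainder of the derivation; the remainder contains strictly fewer R5 applications, so the induction hypothesis plus a final appeal to Lemma~\ref{lemma-soundness} finishes the job.

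All the genuine content is therefore the semantic soundness of R5. Given the R5 premises, for every $k' = 1,\ldots,n$,
\[
\{\assinseq{b_j}{P_j}{S\rep}{0}{Q_j}\}_j \Entr \assinseq{b_{k'}}{P_{k'}}{S \conc S\rep}{0}{Q_{k'}},
\]
the corollary of Lemma~\ref{lemma-replacement} recorded just after its proof converts these into $\Entr$-derivations $\{\assinseq{b_j}{P_j}{T}{0}{Q_j}\}_j \Entr \assinseq{b_{k'}}{P_{k'}}{S \conc T}{0}{Q_{k'}}$ for any closed $\PGA$ term $T$. The plan is to instantiate $T$ with the finite approximations $T_i := S^i \conc (\fjmp{0})^M$, where $M \geq \max_j b_j$ is large enough that the tail pad can accommodate every entry point $b_j$. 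For the base $T_0 = (\fjmp{0})^M$ each $\assinseq{b_j}{P_j}{T_0}{0}{Q_j}$ holds trivially, because execution from $b_j$ immediately meets a $\fjmp{0}$ and becomes inactive; Lemma~\ref{lemma-soundness} then propagates this validity inductively along the chain $T_{i+1} = S \conc T_i$, yielding $\ServAlg \Sat \assinseq{b_j}{P_j}{T_i}{0}{Q_j}$ for every $i$ and every $j$.

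The main obstacle is transferring these finite validities back to $S\rep$. The key observation is that any terminating execution of $\extr{S\rep}_{b_k,0}$ on a state $t$ consumes only finitely many primitive instructions, all drawn from some initial segment of the periodic sequence $S\rep$; choosing $i$ so large that the initial $S^i$-block of $T_i$ already contains that segment, the execution of $\extr{T_i}_{b_k,0}$ on $t$ never reaches the trailing $(\fjmp{0})^M$ pad and coincides step-for-step with the execution of $\extr{S\rep}_{b_k,0}$ on $t$, producing the same final state $t'$. Hence $Q_k[t']$ by the already-established semantic validity of $\assinseq{b_k}{P_k}{T_i}{0}{Q_k}$. The remaining clauses of $\assinseq{b_k}{P_k}{S\rep}{0}{Q_k}$ indexed by $e' > 0$ collapse to the $e' = 0$ clause through the identity $S\rep \conc \sigma(e') = S\rep$ supplied by PGA3, and are dispatched by the same finite-trace argument. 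This closes the semantic soundness of R5 and, via the outer induction, the theorem.
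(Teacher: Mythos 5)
Your proposal is correct and follows essentially the same route as the paper: the only substantive case is R5, which you handle exactly as the paper does, via Lemma~\ref{lemma-replacement} to pass from $S\rep$ to the finite approximations $S^i \conc \fjmp{0}^M$ (the paper's $S^i \conc \fjmp{0}^b$, padded for the same reason you give), Lemma~\ref{lemma-soundness} to propagate validity by induction on $i$ starting from the trivially valid base $\fjmp{0}^M$, and a finite-trace/limit argument to transfer the result back to $S\rep$ (the paper's properties (1)--(2) and (a)--(b)). The only cosmetic difference is your outer induction on the number of R5 applications rather than on proof length, which requires the induction hypothesis to be stated for derivations from $\Th{\ServAlg}$ together with a set of already-validated premises, but that is a routine strengthening.
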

\begin{proof}
This is proved by induction on the length of proofs, case distinction on 
the axiom applied in the basis step, and case distinction on the rule of 
inference last applied in the inductive step.
The only difficult case is the repetition rule (R5).
We will only outline the proof for this case.

The following properties follow directly from the definition of 
$\ISM{\ServAlg}$:
\begin{enumerate}
\item[(1)]
$\ISM{\ServAlg} \Sat \extr{S^0 \conc \fjmp{0}^b}_{b,0} \sfapply 
 t = \emptysf$;
\item[(2)]
$\ISM{\ServAlg} \Sat \extr{S\rep}_{b,0} \sfapply t = t'$ iff 
there exists an $j > 0$ such that:
\begin{itemize}
\item[]
for all $k \geq j$,
$\ISM{\ServAlg} \Sat \extr{S^k \conc \fjmp{0}^b}_{b,0} \sfapply t = t'$,
\item[]
for all $k < j$,
$\ISM{\ServAlg} \Sat \extr{S^k \conc \fjmp{0}^b}_{b,0} \sfapply 
 t = \emptysf$.
\end{itemize}
\end{enumerate}
These properties could be largely proved in a formal way if the combined 
algebraic theory of $\ISM{\ServAlg}$ developed in 
Sections~\ref{sect-PGA-and-BTA} and~\ref{sect-TSI} would be extended 
with projection operators and axioms for them as in~\cite{BM09k}.

The following properties follow directly from properties~(1) and~(2):
\begin{enumerate}
\item[(a)]
$\ServAlg \Sat \assinseq{b}{P}{S^0 \conc \fjmp{0}^b}{0}{Q}$; 
\item[(b)]
$\ServAlg \Sat \assinseq{b}{P}{S\rep}{0}{Q}$ iff,
for all $i \geq 0$, 
$\ServAlg \Sat \assinseq{b}{P}{S^i \conc \fjmp{0}^b}{0}{Q}$.
\end{enumerate}

Let $k,n \in \Natpos$ be such that $k \leq n$, and 
let $S \in \CTerm{\InSeq}$, 
$P_1,\ldots,P_n,Q_1,\ldots,Q_n \in \Lang{\ServAlg}$, and 
$b_1,\ldots,b_n \in \Natpos$.
Then, from the hypotheses of R5 and Lemmas~\ref{lemma-replacement}
and~\ref{lemma-soundness}, it follows immediately that, for all 
$i \geq 0$, 
\mbox{$\ServAlg \Sat \assinseq{b_1}{P_1}{S^i \conc \fjmp{0}^b}{0}{Q_1}$} 
and \ldots\ and
$\ServAlg \Sat \assinseq{b_n}{P_n}{S^i \conc \fjmp{0}^b}{0}{Q_n}$ 
implies 
$\ServAlg \Sat \assinseq{b_k}{P_k}{S^{i+1} \conc \fjmp{0}^b}{0}{Q_k}$.
From this and property~(a), it follows by induction on $i$ that, 
for all $i \geq 0$, 
$\ServAlg \Sat \assinseq{b_k}{P_k}{S^i \conc \fjmp{0}^b}{0}{Q_k}$.
From this and property~(b), it follows immediately that
$\ServAlg \Sat \assinseq{b_k}{P_k}{S\rep}{0}{Q_k}$.
This completes the proof for the case of the repetition rule.
\qed
\end{proof}
The line of the proof of Theorem~\ref{theorem-soundness} for the case 
that the rule of inference last applied is R5 is reminiscent of the line 
of the soundness proof in~\cite{Cla79a} for the case that the rule of 
inference last applied is the recursion rule for calls of recursive 
procedures.
In the proof of Theorem~\ref{theorem-soundness}, $S^i \conc \fjmp{0}^b$ 
is used instead of $S^i$ to guarantee that $b$ is never greater than the 
length of the approximations of $S\rep$.

There is a problem with establishing completeness for all service 
algebras.
In the completeness proof, it has to be assumed that, for each service 
algebra $\ServAlg$, necessary intermediate conditions can be expressed 
in $\Lang{\ServAlg}$.
Therefore, completeness will only be established for all service 
algebras that are sufficiently expressive.

Let $\ServAlg$ be a service algebra, and let $S \in \CTerm{\InSeq}$, 
$P,Q \in \Lang{\ServAlg}$, $b \in \Natpos$ and $e \in \Nat$.
Then $Q$ \emph{expresses the strongest post-condition of $P$ and $S$ for
$b$ and $e$ on $\ServAlg$} if 
$\ServAlg \Sat \assinseq{b}{P}{S}{e}{\Ltrue}$ and,
for each state representing term $t'$ for $P$, $Q$, and $S$ with respect 
to $\ServAlg$, $\ServAlg \Sat Q[t']$ iff 
there exists a state representing term $t$ for $P$, $Q$, and $S$ with 
respect to $\ServAlg$ such that $\ServAlg \Sat P[t]$ and 
$\ISM{\ServAlg} \Sat \extr{S}_{b,e} \sfapply t = t'$.

Let $\ServAlg$ be a service algebra.
Then the language $\Lang{\ServAlg}$ is \emph{expressive for 
$\CTerm{\InSeq}$ on $\ServAlg$} if, for each $S \in \CTerm{\InSeq}$, 
$P \in \Lang{\ServAlg}$, $b \in \Natpos$, and $e \in \Nat$ with 
$\ServAlg \Sat \assinseq{b}{P}{S}{e}{\Ltrue}$, there exists a 
$Q \in \Lang{\ServAlg}$ such that $Q$ expresses the strongest 
post-condition of $P$ and $S$ for $b$ and $e$ on $\ServAlg$.

In the above definitions, $\ServAlg \Sat \assinseq{b}{P}{S}{e}{\Ltrue}$
is used to express that there exists a post-condition of $P$ and $S$ for
$b$ and $e$ on $\ServAlg$. 

The following remarks about the existence of strongest post-conditions 
may be useful for a clear understanding of the matter.
For each $S \in \CTerm{\InSeq}$, $P \in \Lang{\ServAlg}$, and
$b \in \Natpos$, one of the following is the case regarding the 
existence of a strongest post-condition:
\begin{enumerate}
\item[(1)]
there is no $e \in \Nat$ for which there exists a strongest 
post-condition of $P$ and $S$ for $b$ and $e$;
\item[(2)]
there is exactly one $e \in \Nat$ for which there exists a strongest 
post-condition of $P$ and $S$ for $b$ and $e$ and the strongest 
post-condition concerned is not equivalent to $\Lfalse$;
\item[(3)]
there is more than one $e \in \Nat$ for which there exists a strongest 
post-condi\-tion of $P$ and $S$ for $b$ and $e$ and the strongest 
post-condition concerned is equivalent to $\Lfalse$.
\end{enumerate}
We say that execution is convergent in $S$ if it does not become 
inactive in $S$.
Terminating in $S$ is one way in which execution may be convergent in 
$S$, exiting $S$ by going to the $e$th instruction following $S$ is 
another way in which execution may be convergent in $S$, and exiting $S$ 
by going to the $e'$th instruction following $S$, where $e' \neq e$, is 
still another way in which execution may be convergent in~$S$.
Now, (1) is the case if there is more than one way in which execution 
may be convergent in $S$, (2) is the case if there is exactly one 
way in which execution may be convergent in $S$, and (3) is the case 
if there is no way in which execution may be convergent in $S$.

The proof of the completeness theorem for the presented Hoare-like 
logic given below (Theorem~\ref{theorem-completeness}) will make use of 
the following four lemmas.
\begin{lemma}
\label{lemma-repetition}
Let $\ServAlg$ be a service algebra.
Then, for each $S \in \CTerm{\InSeq}$, $P,Q \in \Lang{\ServAlg}$, 
$b \in \Natpos$, and $e \in \Nat$, 
$\ServAlg \Sat \assinseq{b}{P}{S\rep}{e}{Q}$ only if $e = 0$.
\end{lemma}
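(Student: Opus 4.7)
The approach exploits axiom PGA3, which states $X\rep \conc Y = X\rep$. Applied with $Y = \sigma(e)$ this yields $S\rep \conc \sigma(e) = S\rep$, and hence by congruence of thread extraction in the combined model $\ISM{\ServAlg}$,
\[
\extr{S\rep}_{b,e} \;=\; \extr{\fjmp{b} \conc S\rep \conc \sigma(e)} \;=\; \extr{\fjmp{b} \conc S\rep} \;=\; \extr{S\rep}_{b,0}.
\]
Intuitively, because $S\rep$ is infinite there is no instruction actually following it, so the would-be $e$-th exit is vacuous for any $e > 0$; the lemma is essentially a formalisation of this observation.

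The concrete plan is as follows. First, establish the thread identity above, using PGA3 together with the fact that thread extraction in $\ISM{\ServAlg}$ respects equalities between $\InSeq$-terms. Second, assume toward contradiction that $\ServAlg \Sat \assinseq{b}{P}{S\rep}{e}{Q}$ with $e > 0$, and fix a state representing term $t$ for $P$, $Q$, and $S\rep$ satisfying $\ServAlg \Sat P[t]$. Specialise the first clause of the definition of $\Sat$ to the off-exit index $e' = 0$ (legal because $0 \neq e$) to obtain $\ISM{\ServAlg} \Sat \extr{S\rep}_{b,0} \sfapply t = \emptysf$; by the displayed identity also $\extr{S\rep}_{b,e} \sfapply t = \emptysf$. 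Third, note that state representing terms for $Q$ or $S\rep$ are never equal to $\emptysf$ whenever any relevant focus is involved, so the second clause of the definition --- the one that would propagate $P$ to $Q$ along the $e$-th exit --- has no witness $t'$ at all. Consequently the alleged $e$-th exit cannot substantively hold, so $e$ must in fact be $0$.

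The main obstacle is the congruence step that lifts PGA3 into the combined algebra $\ISM{\ServAlg}$: we need PGA-equivalent $\InSeq$-terms to extract to equal threads and thereafter yield the same $\sfapply$-result against any service family. This relies on $\ISM{\ServAlg}$ being set up so that PGA-equivalences are preserved under thread extraction and under apply, which is ensured by the amalgamation-based construction of $\ISM{\ServAlg}$ appealed to in Section~\ref{sect-HL-PC}. Once that step is in place the rest of the argument is a direct unfolding of the semantic definition of $\Sat$ for asserted instruction sequences, and no further machinery (such as the projection operators invoked in the proof of Theorem~\ref{theorem-soundness}) is needed.
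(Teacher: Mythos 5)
Your opening observation --- that PGA3 gives $S\rep \conc \sigma(e) = S\rep$ and hence $\extr{S\rep}_{b,e} = \extr{S\rep}_{b,0}$ for every $e$ --- is correct, and it is a genuinely different starting point from the paper's proof, which instead splits on whether the repetition operator occurs in $S$, handles the repetition-free case by induction on $\len(S)$, and reduces the remaining case to it via the normal-form corollary of the proof of Lemma~2.6 of \cite{BM12b}. The problem is your final step. After specialising the first clause of the satisfaction definition to $e' = 0$ you correctly obtain $\ISM{\ServAlg} \Sat \extr{S\rep}_{b,e} \sfapply t = \emptysf$, and you correctly note that the second clause is then vacuous because $\emptysf$ is not a state representing term for a non-empty focus set. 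But at that point you have shown that \emph{both} clauses of the definition of $\ServAlg \Sat \assinseq{b}{P}{S\rep}{e}{Q}$ are satisfied under your hypothesis --- that is consistency, not a contradiction. The definition is a partial-correctness semantics: it contains no positive requirement that execution actually reach the $e$th instruction following the segment, so ``the alleged $e$-th exit cannot substantively hold'' does not violate any clause, and ``so $e$ must in fact be $0$'' is a non sequitur. To see that the step cannot be patched within your framing, take an $S$ for which execution of $S\rep$ always becomes inactive (e.g.\ $S = \fjmp{0}$, padded with a basic instruction if you want $\vars(S) \neq \emptyset$): every intermediate claim in your argument holds, yet no clause of the definition fails. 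Whatever content Lemma~\ref{lemma-repetition} carries must be extracted from a requirement of the semantics that genuinely \emph{fails} when $e > 0$, and your argument never identifies one.

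A secondary gap: you ``fix a state representing term $t$ satisfying $\ServAlg \Sat P[t]$'', but no such $t$ need exist --- for $P = \Lfalse$ the paper itself observes that $\assinseq{b}{\Lfalse}{S}{e}{Q}$ holds for every $e$ with $b \leq \len(S)$ --- so even a repaired contradiction argument would have to handle unsatisfiable pre-conditions separately. By contrast, the step you single out as the main obstacle is the unproblematic one: the $\InSeq$-reduct of $\ISM{\ServAlg}$ is the initial model of \PGA, so PGA-equal closed terms denote the same instruction sequence and therefore extract to the same thread; no extra congruence machinery is needed there.
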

\begin{proof}
This is proved by distinguishing two cases: the repetition operator does 
not occur in $S$ and the repetition operator occurs in $S$.
The former case is easily proved by induction on $\len(S)$.
The latter case follows directly from the following corollary of the 
proof of Lemma~2.6 from~\cite{BM12b}: 
for each $S \in \CTerm{\InSeq}$ in which the repetition operator occurs, 
there exists an $S'$ in which the repetition operator does not occur 
such that $\extr{S}_{b,e} = \extr{{S'}\rep}_{b,e}$.
\qed
\end{proof}
Lemma~\ref{lemma-repetition} tells us that execution never exits an 
instruction sequence segment of the form $S\rep$.

The following lemma expresses that the axioms and rules of inference of 
the presented Hoare-like logic are complete for all instruction sequence 
segments of the form $S\rep$ only if they are complete for all 
instruction sequence segments.
\begin{lemma}
\label{lemma-completeness}
\sloppy
Let $\ServAlg$ be a service algebra such that $\Lang{\ServAlg}$ is 
expressive for $\CTerm{\InSeq}$ on $\ServAlg$.
Assume that, for each $S \in \CTerm{\InSeq}$, $P,Q \in \Lang{\ServAlg}$, 
$b \in \Natpos$, and $e \in \Nat$, 
$\ServAlg \Sat \assinseq{b}{P}{S\rep}{e}{Q}$ implies 
$\Th{\ServAlg} \Ent \assinseq{b}{P}{S\rep}{e}{Q}$.
Then, for each $S \in \CTerm{\InSeq}$, \mbox{$P,Q \in \Lang{\ServAlg}$}, 
$b \in \Natpos$, and $e \in \Nat$, 
$\ServAlg \Sat \assinseq{b}{P}{S}{e}{Q}$ implies 
$\Th{\ServAlg} \Ent \assinseq{b}{P}{S}{e}{Q}$.
\end{lemma}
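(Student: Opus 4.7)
The plan is to proceed by structural induction on $S$, distinguishing the cases that $S$ is a primitive instruction, that $S = S_1 \conc S_2$, and that $S = T\rep$. The case $S = T\rep$ is immediate from the hypothesis applied with $T$. The primitive-instruction case reduces to a case analysis over the five kinds of primitive instructions: the semantic validity of $\assinseq{1}{P}{u}{e}{Q}$ forces $P$, over $\Th{\ServAlg}$, to split into disjuncts each matching one of the axioms A1--A11 for the appropriate exit value, so the derivation is assembled from those axioms using R6 and R10.

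The bulk of the work lies in the concatenation case $S = S_1 \conc S_2$. If $b > \len(S_1)$, then the semantics gives $\ServAlg \Sat \assinseq{b - \len(S_1)}{P}{S_2}{e}{Q}$, and the induction hypothesis combined with R4 concludes. If $b \leq \len(S_1)$, then execution through $S_1 \conc S_2$ from $(b, P)$ follows one of three path types with respect to $S_1$: (i) it becomes inactive or terminates inside $S_1$, which is relevant only when $e = 0$; (ii) it exits $S_1$ at a position $j$ with $1 \leq j \leq \len(S_2)$, entering $S_2$ at its $j$th instruction; or (iii) it exits $S_1$ at a position $j > \len(S_2)$, thereby exiting $S_1 \conc S_2$ at its $(j - \len(S_2))$th following instruction. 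I will partition $P$ according to path type into disjuncts $P_j$ and, for each type-(ii) disjunct, introduce an intermediate assertion $R_j$ describing the state at entry to $S_2$. The derivation then assembles as follows: for each type-(ii) disjunct, apply the induction hypothesis to $\assinseq{b}{P_j}{S_1}{j}{R_j}$ and $\assinseq{j}{R_j}{S_2}{e}{Q}$ and chain by R1; for each type-(iii) disjunct, apply the induction hypothesis to $\assinseq{b}{P_j}{S_1}{e + \len(S_2)}{Q}$ and lift by R2; for the type-(i) disjunct, apply the induction hypothesis to $\assinseq{b}{P_j}{S_1}{0}{Q}$ and lift by R3; finally combine all disjuncts by R6 together with R10.

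The hard part is extracting the $P_j$ and the $R_j$ from the expressiveness hypothesis, because expressiveness yields a strongest post-condition of $(P', S', b', e')$ only when $\ServAlg \Sat \assinseq{b'}{P'}{S'}{e'}{\Ltrue}$, which requires the behavior of $S'$ from $(b', P')$ to be uniquely convergent, whereas $S_1$ will in general admit several exits from $(b, P)$, so expressiveness cannot be applied to $S_1$ directly. The workaround is to construct, for each target path, an auxiliary finite instruction sequence obtained by a syntactic transformation of $S_1$; for instance, $S_1' \conc \fjmp{0}^{j-1} \conc \halt$, where $S_1'$ is $S_1$ with every occurrence of $\halt$ replaced by $\fjmp{0}$, terminates precisely when $S_1$ exits at position $j$ and becomes inactive otherwise, so its behavior is uniquely convergent and expressiveness applies. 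A standard ``frozen-variables'' argument then converts the resulting strongest post-condition into the pre-condition $P_j$, while $R_j$ is obtained as the strongest post-condition of $P_j$ and $S_1$ for $b$ and $j$, whose existence is now guaranteed. Analogous constructions handle the type-(i) and type-(iii) paths, after which the assembly described above proceeds without further difficulty.
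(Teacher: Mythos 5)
Your proposal takes essentially the same route as the paper's proof: structural induction on $S$, with the repetition case discharged by the lemma's hypothesis, the single-instruction cases by the axioms together with R6 and R10, and the concatenation case split into $b > \len(S_1)$ (R4 plus the induction hypothesis) and $b \leq \len(S_1)$ with exactly the paper's three subcases --- your path types (ii), (iii), (i) are the paper's (a), (b), (c) --- assembled by R1--R3, R6 and R10. The only genuine divergence is in how the disjuncts $P_j$ and the intermediate assertions $R_j$ are obtained. The paper establishes the whole decomposition, including the $P_j$ and the exit positions $i_j$, by an inner induction on $\len(S_1)$ when the repetition operator does not occur in $S_1$, reduces the remaining subcase to Lemma~\ref{lemma-repetition} and the normal-form corollary of Lemma~2.6 of~\cite{BM12b}, and invokes expressiveness only to obtain the $R_j$. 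You instead manufacture the $P_j$ by applying expressiveness to auxiliary terms such as $S_1' \conc \fjmp{0}^{j-1} \conc \halt$, which turn ``exit at position $j$'' into the unique convergent behaviour, followed by a freeze-and-quantify step; this is a legitimate and arguably more explicit way to discharge the step the paper compresses into ``easily proved by induction on $\len(S_1)$'', and you correctly identify why expressiveness cannot be applied to $S_1$ directly. Two small caveats: your auxiliary construction presupposes $\len(S_1) < \omega$ --- when the repetition operator occurs in $S_1$ the appended instructions are absorbed (PGA3), and you should instead argue, as the paper does via Lemma~\ref{lemma-repetition}, that only the type-(i) path with $e = 0$ survives; and for $e > 0$ the states from which execution becomes inactive inside $S_1$ must still be covered by some disjunct (they fit vacuously under type (iii) with exit $\len(S_2) + e$), whereas your description files them only under type (i).
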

\begin{proof}
This is proved by induction on the structure of $S$.
The cases that $S$ is a single instructions follow, with the exception 
of the termination instruction after a case distinction, directly from 
one of the axioms (A1--A11) and the con\-sequence rule (R10).
The case that $S$ is of the form ${S'}\rep$ follows immediately
from the assumption.
What is left is the case that $S$ is of the form $S_1 \conc S_2$.

If $\ServAlg \Sat \assinseq{b}{P}{S_1 \conc S_2}{e}{Q}$, then it follows 
from the definitions involved that:
\begin{enumerate}
\item[(1)]
if $b \leq \len(S_1)$:\,\, for some $n > 0$, 
there exist $P_1,R_1,\ldots,P_n,R_n \in \Lang{\ServAlg}$ and 
$i_1,\ldots,i_n \in \Natpos$ such that 
$\ServAlg \Sat P \Limpl P_1 \Lor \ldots \Lor P_n$ and,
for each $j$ with $1 \leq j \leq n$, 
$R_j$ expresses the strongest post-condition of $P_j$ and $S_1$ for $b$ 
and $i_j$ on~$\ServAlg$ and one of the following is the case:
\begin{enumerate}
\item[(a)]
$1 \leq i_j \leq \len(S_2)$, \\
$\ServAlg \Sat \assinseq{b}{P_j}{S_1}{i_j}{R_j}$, and
$\ServAlg \Sat \assinseq{i_j}{R_j}{S_2}{e}{Q}$;
\item[(b)]
$i_j = \len(S_2) + e$, $e > 0$, \\
$\ServAlg \Sat \assinseq{b}{P_j}{S_1}{i_j}{R_j}$, and
$\ServAlg \Sat R_j \Limpl Q$;
\item[(c)]
$i_j = 0$, $e = 0$, \\
$\ServAlg \Sat \assinseq{b}{P_j}{S_1}{i_j}{R_j}$, and
$\ServAlg \Sat R_j \Limpl Q$;
\end{enumerate}
\item[(2)]
if $b > \len(S_1)$:\,\,
$\ServAlg \Sat \assinseq{b - \len(S_1)}{P}{S_2}{e}{Q}$.
\end{enumerate}
Case~(1) is proved by distinguishing two subcases: the repetition 
operator does not occur in $S_1$ and the repetition operator occurs in 
$S_1$.
The former subcase is easily proved by induction on $\len(S_1)$.
The latter subcase follows directly from the above-mentioned corollary 
of the proof of Lemma~2.6 from~\cite{BM12b} and 
Lemma~\ref{lemma-repetition}.
In either subcase, the existence of $R_j$'s that express the strongest 
post-conditions needed is guaranteed by the expressiveness property of 
$\Lang{\ServAlg}$.
Case~(2) follows directly from the definitions involved.

In case~(1), $\Th{\ServAlg} \Ent \assinseq{b}{P}{S_1 \conc S_2}{e}{Q}$
follows directly by the in\-duc\-tion hypothesis, the first three 
concatenation rules (R1--R3), and the alternatives rule~(R6).
In case~(2), $\Th{\ServAlg} \Ent \assinseq{b}{P}{S_1 \conc S_2}{e}{Q}$
follows directly by the induction hypothesis and the last concatenation 
rule (R4).
\qed
\end{proof}
The next lemma tells us that the axioms and inference rules of the 
presented Hoare-like logic is complete if provability can be identified 
with provability from a particular set of asserted single-pass 
instruction sequences; and 
the second next lemma expresses that the asserted single-pass 
instruction sequences concerned are provable.
\begin{lemma}
\label{lemma-most-general-assinseq-1}
Let $\ServAlg$ be a service algebra such that $\Lang{\ServAlg}$ is 
expressive for $\CTerm{\InSeq}$ on $\ServAlg$.
For each $S \in \CTerm{\InSeq}$,  
let $x_1^S,\ldots,x_{n_S}^S \in \Foci$ and 
$y_1^S,\ldots,y_{n_S}^S \in \Foci$ be such that 
$\vars(S) = \set{x_1^S,\ldots,x_{n_S}^S}$ and 
$\vars(S) \inter \set{y_1^S,\ldots,y_{n_S}^S} = \emptyset$.
For each $S \in \CTerm{\InSeq}$ and $b \in \Natpos$,  
let $P'_S$ be $x_1^S = y_1^S \Land \ldots \Land x_{n_S}^S = y_{n_S}^S$, 
and let $Q'_{S,b} \in \Lang{\ServAlg}$ be such that $Q'_{S,b}$ expresses 
the strongest post-condition of $P'_S$ and $S\rep$ for $b$ and $0$ on 
$\ServAlg$.
For each $S \in \CTerm{\InSeq}$ and $b \in \Natpos$, 
let 
$\nm{ub}_{S,b} = 
 \max \set{b' \in \Natpos \where 
           b' = b \Lor
           \fjmp{b'} \mathrel{\mathrm{occurs\,in}} S}$.
Then, for each $S' \in \CTerm{\InSeq}$, $P,Q \in \Lang{\ServAlg}$, and
$b \in \Natpos$, 
$\ServAlg \Sat \assinseq{b}{P}{S'}{0}{Q}$ implies
$\Th{\ServAlg} \union 
 \set{\assinseq{b'}{P'_S}{S\rep}{0}{Q'_{S,b'}} \where
      b' \leq \nm{ub}_{S',b} \Land 
      S\rep \mathrel{\mathrm{is\,a\,subterm\,of}} S'} 
  \Ent \assinseq{b}{P}{S'}{0}{Q}$.
\end{lemma}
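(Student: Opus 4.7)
The plan is to proceed by structural induction on $S'$, mirroring the proof of Lemma~\ref{lemma-completeness} but with the repetition case now handled via the supplied ``most general'' asserted instruction sequences $\assinseq{b'}{P'_S}{S\rep}{0}{Q'_{S,b'}}$ rather than a blanket completeness assumption.

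For the base case in which $S'$ is a single instruction constant, the result follows, after a case distinction on the instruction kind, from one of the axioms A1--A11 with exit point $0$ combined with the consequence rule R10. For the inductive case $S' = S_1 \conc S_2$, I would mirror Lemma~\ref{lemma-completeness}'s concatenation case. When $b > \len(S_1)$, invoke the induction hypothesis on $S_2$ at entry point $b - \len(S_1)$ and apply R4. When $b \leq \len(S_1)$ and $S_1$ contains a repetition, the corollary from~\cite{BM12b} cited in the proof of Lemma~\ref{lemma-completeness}, together with Lemma~\ref{lemma-repetition}, ensures that execution entering $S_1$ at $b$ can only terminate or become inactive in $S_1$; I would then apply the induction hypothesis on $S_1$ followed by R3. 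When $b \leq \len(S_1)$ and $S_1$ is repetition-free, I would use the expressiveness of $\Lang{\ServAlg}$ to produce intermediate formulas $P_j, R_j$ and exit points $i_j$ exactly as in Lemma~\ref{lemma-completeness}'s case~(1), derive $\assinseq{b}{P_j}{S_1}{i_j}{R_j}$ unconditionally (see the final paragraph), derive $\assinseq{i_j}{R_j}{S_2}{0}{Q}$ from the induction hypothesis on $S_2$, and combine via R1/R3, R6, and R10. In passing, one verifies that the assumption sets required by the induction hypotheses on $S_1$ and $S_2$ are contained in the assumption set for $S'$: each $S\rep$ subterm of $S_1$ or $S_2$ is also a subterm of $S'$, and $\nm{ub}_{S_1,b}, \nm{ub}_{S_2,i_j} \leq \nm{ub}_{S',b}$ by the definition of $\nm{ub}$, using that each $i_j$ is bounded by $\max(b,k)$ where $k$ is the largest forward-jump offset occurring in $S_1$.

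The case $S' = R\rep$ is the heart of the proof. Since $R\rep$ is a subterm of itself and $b \leq \nm{ub}_{R\rep,b}$ trivially, the asserted instruction sequence $\assinseq{b}{P'_R}{R\rep}{0}{Q'_{R,b}}$ is available as an assumption. Writing $\vec{x}$ for $x_1^R,\ldots,x_{n_R}^R$ and $\vec{y}$ for $y_1^R,\ldots,y_{n_R}^R$, and (after a renaming via R9 if necessary) assuming also $\vec{y} \inter (\vars(P) \union \vars(Q)) = \emptyset$, my derivation would proceed as follows. First, apply R7 with invariant $P\subst{\vec{y}}{\vec{x}}$; the side condition $\vars(P\subst{\vec{y}}{\vec{x}}) \inter \vars(R\rep) = \emptyset$ holds because $\vec{y}$ is fresh and the only other free variables of $P\subst{\vec{y}}{\vec{x}}$ are foci of $P$ outside $\vars(R)$. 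This yields $\assinseq{b}{P'_R \Land P\subst{\vec{y}}{\vec{x}}}{R\rep}{0}{Q'_{R,b} \Land P\subst{\vec{y}}{\vec{x}}}$. Next, using the strongest-post-condition property of $Q'_{R,b}$ and the fact that the foci $\vec{y}$ are not modified by $R\rep$, unpack the hypothesis $\ServAlg \Sat \assinseq{b}{P}{R\rep}{0}{Q}$ to establish $\ServAlg \Sat (Q'_{R,b} \Land P\subst{\vec{y}}{\vec{x}}) \Limpl Q$, and apply R10 to weaken the post-condition to $Q$. Then apply R8 to eliminate $\vec{y}$ from the pre-condition --- legal since $\vec{y} \inter (\vars(R\rep) \union \vars(Q)) = \emptyset$ --- yielding $\assinseq{b}{\Exists{\vec{y}}{P'_R \Land P\subst{\vec{y}}{\vec{x}}}}{R\rep}{0}{Q}$. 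Finally, use the first-order-logic tautology $P \Liff \Exists{\vec{y}}{P'_R \Land P\subst{\vec{y}}{\vec{x}}}$ and R10 to conclude $\assinseq{b}{P}{R\rep}{0}{Q}$.

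The hardest part, I expect, is the concatenation sub-case where $S_1$ is repetition-free and execution exits $S_1$ at some $i_j > 0$: here one needs provability of $\assinseq{b}{P_j}{S_1}{i_j}{R_j}$ with an exit point that may exceed $0$, which is not directly covered by the induction hypothesis (stated for $e = 0$ only). I plan to fill this gap with an auxiliary unconditional-completeness sub-induction on $\len(S_1)$ for repetition-free segments, which bypasses R5 entirely and proceeds purely from the basic-instruction axioms and the structural rules. Once this is in place, the remaining bookkeeping --- tracking inclusions of assumption sets via the $\nm{ub}$ bounds and assembling derivations from the concatenation, alternatives, and consequence rules --- is routine.
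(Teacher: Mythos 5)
Your proposal is correct and follows essentially the same route as the paper: structural induction on $S'$, with the repetition case discharged exactly as in the paper by freezing the initial state via the invariance rule (R7) applied to the assumed most general asserted instruction sequence, establishing $(Q'_{R,b} \Land P\subst{\vec{y}}{\vec{x}}) \Limpl Q$ semantically from the strongest-post-condition property, and then finishing with R10, R8, R10, and a final renaming via R9. Your explicit auxiliary sub-induction giving unconditional provability of $\assinseq{b}{P_j}{S_1}{i_j}{R_j}$ for repetition-free $S_1$ with $i_j > 0$ fills in a step the paper leaves implicit by simply deferring the concatenation case to the proof of Lemma~\ref{lemma-completeness}, whose induction hypothesis, unlike the one here, covers all exit points.
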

\begin{proof}
\sloppy
This is proved by induction on the structure of $S'$.
The cases that $S'$ is a single instruction follow directly from one of 
the axioms (A2, A5, A8, A10, A11) and the consequence rule (R10).
The case that $S'$ is of the form $S_1 \conc S_2$ is proved, using the 
induction hypothesis, in the same way as the case of concatenation in 
the proof of Lemma~\ref{lemma-completeness}.
What is left is the case that $S'$ is of the form $S\rep$.

In the case that $S'$ is of the form $S\rep$, it suffices to show
that, for each $S \in \CTerm{\InSeq}$, $P,Q \in \Lang{\ServAlg}$, and
$b \in \Natpos$, 
$\ServAlg \Sat \assinseq{b}{P}{S\rep}{0}{Q}$ implies
$\Th{\ServAlg} \union 
 \set{\assinseq{b}{P'_S}{S\rep}{0}{Q'_{S,b}}} \Ent
 \assinseq{b}{P}{S\rep}{0}{Q}$.

Let $S \in \CTerm{\InSeq}$, $P,Q \in \Lang{\ServAlg}$, and 
$b \in \Natpos$, and let $z_1,\ldots,z_{n_S} \in \Foci$ be such that 
$(\vars(S) \union \vars(P) \union \vars(Q) \union
 \set{y_1,\ldots,y_{n_S}}) \inter \set{z_1,\ldots,z_{n_S}} = \emptyset$.
Moreover, 
let $P_1$ be $P\subst{z_1}{y_1^S} \ldots \subst{z_{n_S}}{y_{n_S}^S}$, 
let $Q_1$ be $Q\subst{z_1}{y_1^S}\ldots\subst{z_{n_S}}{y_{n_S}^S}$, and
let $P_2$ be 
$P_1\subst{y_1^S}{x_1^S} \ldots \subst{y_{n_S}^S}{x_{n_S}^S}$.
In the rest of this proof, a \emph{state representing term}
is a closed term of sort $\ServFam$ that is a state representing term 
for $P$, $Q$, $S$, and 
$\set{y_1^S,\ldots,y_{n_S}^S} \union \set{z_1,\ldots,z_{n_S}}$ with 
respect to $\ServAlg$.
Assume $\ServAlg \Sat \assinseq{b}{P}{S\rep}{0}{Q}$.

From $\assinseq{b}{P'_S}{S\rep}{0}{Q'_{S,b}}$, it follows that
$\assinseq{b}{P'_S \Land P_2}{S\rep}{0}{Q'_{S,b} \Land P_2}$~$(*)$
by the invariance rule (R7).
We now show that $\ServAlg \Sat (Q'_{S,b} \Land P_2) \Limpl Q_1$.

Let $t'$ be a state representing term.
Assume $\ServAlg \Sat (Q'_{S,b} \Land P_2)[t']$.
By the definition of $Q'_{S,b}$, there exists a state representing term 
$t$ such that $\ServAlg \Sat P'_S[t]$ and 
$\ISM{\ServAlg} \Sat \extr{S\rep}_{b,0} \sfapply t = t'$ and
$\ISM{\ServAlg} \Sat \extr{S}_{b,e'} \sfapply t = \emptysf$
for all $e' \in \Nat$ with $e \neq e'$.
Suppose $\ServAlg \Sat (\Lnot P_2)[t]$.
From this, the just-mentioned properties of $t$, and the soundness of 
the invariance rule, it follows that $\ServAlg \Sat (\Lnot P_2)[t']$. 
This contradicts the assumption that 
$\ServAlg \Sat (Q'_S \Land P_2)[t']$.
Consequently, $\ServAlg \Sat P_2[t]$.
From this, the first of the above-mentioned properties of $t$, and the 
fact that $\ServAlg \Sat (P'_S \Land P_2) \Limpl P_1$, it follows that 
$\ServAlg \Sat P_1[t]$.
From this, 
the assumption that $\ServAlg \Sat \assinseq{b}{P}{S\rep}{0}{Q}$, and 
the soundness of the substitution rule (R9), it follows that 
$\ServAlg \Sat Q_1[t']$.
This proves that $\ServAlg \Sat (Q'_{S,b} \Land P_2) \Limpl Q_1$~$(**)$.

\sloppy
From $(*)$ and $(**)$, it now follows by the consequence rule (R10) that 
$\assinseq{b}{P'_S \Land P_2}{S\rep}{0}{Q_1}$.
From this, it follows by the elimination rule (R8) that
$\assinseq{b}{\Exists{y_1^S,\ldots,y_{n_S}^S}{(P'_S \Land P_2)}}{S\rep}
          {0}{Q_1}$.
From this and the fact that 
$\ServAlg \Sat P_1 \Limpl
 \Exists{y_1^S,\ldots,y_{n_S}^S}{(P'_S \Land P_2)}$, 
it follows that $\assinseq{b}{P_1}{S\rep}{0}{Q_1}$ by the consequence 
rule.
From this, it follows that $\assinseq{b}{P}{S\rep}{0}{Q}$ by the 
substitution rule.
\qed
\end{proof}

\begin{lemma}
\label{lemma-most-general-assinseq-2}
Let $\ServAlg$ and, for each $S \in \CTerm{\InSeq}$ and $b \in \Natpos$, 
$P'_S$, and $Q'_{S,b}$ be as in 
Lemma~\ref{lemma-most-general-assinseq-1}.
Then, for each $S \in \CTerm{\InSeq}$ and $b \in \Natpos$, 
$\Th{\ServAlg} \Ent \assinseq{b}{P'_S}{S\rep}{0}{Q'_{S,b}}$.

\end{lemma}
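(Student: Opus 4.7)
The plan is to proceed by induction on the number of $\rep$-operators occurring in $S$, using the repetition rule R5 as the final step and Lemma~\ref{lemma-most-general-assinseq-1}, applied to $S \conc S\rep$, to discharge R5's antecedents. Let $M = \nm{ub}_{S,b}$ and set
\[
\Psi \;=\; \{\assinseq{b'}{P'_S}{S\rep}{0}{Q'_{S,b'}} : b' \in \Natpos,\; b' \leq M\}.
\]
Since $b \leq M$, an application of R5 with hypothesis set $\Psi$ will yield $\Th{\ServAlg} \Ent \assinseq{b}{P'_S}{S\rep}{0}{Q'_{S,b}}$ provided that, for every $b' \leq M$, we establish
\[
\Psi \Entr \assinseq{b'}{P'_S}{S \conc S\rep}{0}{Q'_{S,b'}}.
\]

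To secure each antecedent, first note that $\ServAlg \Sat \assinseq{b'}{P'_S}{S \conc S\rep}{0}{Q'_{S,b'}}$ holds because the threads $\extr{S \conc S\rep}_{b',0}$ and $\extr{S\rep}_{b',0}$ coincide (since $S \conc S\rep = S\rep$ is derivable from the \PGA\ axioms) and $Q'_{S,b'}$ is by definition the strongest post-condition of $P'_S$ and $S\rep$ for $b'$ and $0$. Lemma~\ref{lemma-most-general-assinseq-1}, applied to the instruction sequence $S \conc S\rep$, then supplies a derivation of $\assinseq{b'}{P'_S}{S \conc S\rep}{0}{Q'_{S,b'}}$ from $\Th{\ServAlg}$ augmented with assumptions of the form $\assinseq{b''}{P'_{S''}}{S''\rep}{0}{Q'_{S'',b''}}$, where $b'' \leq \nm{ub}_{S \conc S\rep, b'} \leq M$ and $(S'')\rep$ is a subterm of $S \conc S\rep$. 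Inspection of the proof of Lemma~\ref{lemma-most-general-assinseq-1} shows it never invokes R5, so this derivation is R5-free.

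The auxiliary assumptions split into two classes: (i) those with $S'' = S$, which are exactly members of $\Psi$ and hence already available as hypotheses of R5; and (ii) those in which $(S'')\rep$ is a proper $\rep$-subterm of $S$. For class~(ii), $S''$ has strictly fewer $\rep$-operators than $S$, so the induction hypothesis supplies $\Th{\ServAlg} \Ent \assinseq{b''}{P'_{S''}}{S''\rep}{0}{Q'_{S'',b''}}$ as an already-established theorem. Treating the class-(ii) assertions as previously proved lemmas completes the R5-free derivation of each antecedent from $\Psi$, and R5 then delivers the goal, closing the induction.

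The delicate point is the interplay between the R5-free restriction on R5's antecedents and the use of class-(ii) theorems that were themselves obtained using R5 (via the induction hypothesis). The standard resolution in completeness proofs for recursion-style rules (cf.~\cite{Cla79a}) is that a previously-established theorem may be freely cited as a lemma within an R5-free derivation: the proscription against R5 applies to the construction of the current subderivation, not to the proofs of the background theorems it cites.
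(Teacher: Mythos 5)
Your proposal follows the same route as the paper's proof for the essential step: establish $\ServAlg \Sat \assinseq{b'}{P'_S}{S \conc S\rep}{0}{Q'_{S,b'}}$ from the definition of $Q'_{S,b'}$ together with the identity $\extr{S\rep}_{b',0} = \extr{S \conc S\rep}_{b',0}$, feed this into Lemma~\ref{lemma-most-general-assinseq-1} with $S \conc S\rep$ in the role of $S'$ to obtain the antecedents of R5, and close with R5. You are in fact more explicit than the paper on two points it passes over: that the antecedent must be supplied for every $b' \leq \nm{ub}_{S,b}$ so that a single instance of R5 applies (the paper's ``because we have proved this for an arbitrary $b$''), and that the derivation delivered by Lemma~\ref{lemma-most-general-assinseq-1} is free of R5, which is what the premises of R5 (written with $\Entr$) actually require even though that lemma is stated with $\Ent$.

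The divergence is your outer induction on the number of occurrences of the repetition operator, introduced to handle $\rep$-subterms of $S$ nested below the outermost one, and here your argument has a genuine weak point. The paper defines $\Psi \Entr \phi$ as provability of $\phi$ from $\Psi$ \emph{without applications of R5}. A class-(ii) assertion supplied by your induction hypothesis is a theorem whose own proof uses R5; it is not an axiom, not a member of $\Psi$, and not a formula of $\Lang{\ServAlg}$ usable as a side premise of the consequence rule, so citing it inside the antecedent derivation destroys the R5-freeness that R5 demands. Nor can it be absorbed into the hypothesis list of the R5 instance, since all hypotheses of one instance must concern the same term $S\rep$. Your ``standard resolution'' is really a semantic observation (one can check against Lemmas~\ref{lemma-replacement} and~\ref{lemma-soundness} that admitting previously proved valid theorems as extra leaves would not endanger the soundness of R5), but as stated it amounts to changing the definition of $\Entr$ rather than working within it. The paper sidesteps the issue: its proof of this lemma displays a hypothesis set mentioning only $S\rep$ itself, i.e.\ it tacitly treats $S$ as repetition-free, and the nested case is discharged beforehand by the corollary of Theorem~\ref{theorem-nested-rep}, which reduces completeness to terms in which $\rep$ occurs at most once. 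Restricted to that case your class (ii) is empty and your proof coincides with the paper's and is correct.
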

\begin{proof}
\sloppy
Let $S \in \CTerm{\InSeq}$ and $b \in \Natpos$.
Then, by the definition of $Q'_{S,b}$, 
$\ServAlg \Sat \assinseq{b}{P'_S}{S\rep}{0}{Q'_{S,b}}$.
From this, it follows that 
$\ServAlg \Sat \assinseq{b}{P'_S}{S \conc S\rep}{0}{Q'_{S,b}}$
because $\extr{S\rep}_{b,0} = \extr{S \conc S\rep}_{b,0}$.
From this and Lemma~\ref{lemma-most-general-assinseq-1}, it follows that
$\Th{\ServAlg} \union
 \set{\assinseq{b'}{P'_S}{S\rep}{0}{Q'_{S,b'}} \where
      b' \leq \nm{ub}_{S \conc S\rep,b}} \Ent
 \assinseq{b}{P'_S}{S \conc S\rep}{0}{Q'_{S,b}}$, 
where $\nm{ub}_{S,b}$ is defined as in 
Lemma~\ref{lemma-most-general-assinseq-1}.
Because we have proved this for an arbitrary $b$, it follows by the
repetition rule that 
$\Th{\ServAlg} \Ent \assinseq{b}{P'_S}{S\rep}{0}{Q'_{S,b}}$.
\qed
\end{proof}
The lines of the proofs of Lemmas~\ref{lemma-most-general-assinseq-1} 
and~\ref{lemma-most-general-assinseq-2}, which are mostly concerned with
repetition, are reminiscent of the lines of the proofs of Lemmas~1 and~2 
from~\cite{Apt81a}, which are mostly concerned with calls of 
(parameterless) recursive procedures.

The following theorem is the completeness theorem for the presented 
Hoare-like logic.
The weak form of completeness that can be proved is known as 
completeness in the sense of Cook because this notion of completeness
originates from Cook~\cite{Coo78a}.
\begin{theorem}
\label{theorem-completeness}
For each service algebra $\ServAlg$ such that $\Lang{\ServAlg}$ is 
expressive for $\CTerm{\InSeq}$ on~$\ServAlg$ and each asserted 
instruction sequence $\phi$, $\ServAlg \Sat \phi$ implies 
$\Th{\ServAlg} \Ent \phi$. 
\end{theorem}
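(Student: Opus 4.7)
The plan is to assemble the four preceding lemmas; the substantive work has already been done in them, so what remains is essentially a bookkeeping argument that reduces the statement to cases these lemmas directly cover.

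First, I would reduce to the case of repetitions. Let $\phi$ be an asserted instruction sequence with $\ServAlg \Sat \phi$. By Lemma~\ref{lemma-completeness}, it suffices to show that for each $S \in \CTerm{\InSeq}$, $P,Q \in \Lang{\ServAlg}$, $b \in \Natpos$, and $e \in \Nat$, $\ServAlg \Sat \assinseq{b}{P}{S\rep}{e}{Q}$ implies $\Th{\ServAlg} \Ent \assinseq{b}{P}{S\rep}{e}{Q}$. So I only need to treat asserted instruction sequences whose instruction-sequence component is a repetition.

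Second, I would pin down the exit index. By Lemma~\ref{lemma-repetition}, the hypothesis $\ServAlg \Sat \assinseq{b}{P}{S\rep}{e}{Q}$ forces $e = 0$. Therefore the task reduces further to showing that $\ServAlg \Sat \assinseq{b}{P}{S\rep}{0}{Q}$ implies $\Th{\ServAlg} \Ent \assinseq{b}{P}{S\rep}{0}{Q}$.

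Third, I would apply the two most-general-asserted-instruction-sequence lemmas. Applying Lemma~\ref{lemma-most-general-assinseq-1} with $S' := S\rep$, the hypothesis $\ServAlg \Sat \assinseq{b}{P}{S\rep}{0}{Q}$ yields
\[
\Th{\ServAlg} \,\cup\, \set{\assinseq{b'}{P'_S}{S\rep}{0}{Q'_{S,b'}} \where b' \leq \nm{ub}_{S\rep,b}} \Ent \assinseq{b}{P}{S\rep}{0}{Q}.
\]
By Lemma~\ref{lemma-most-general-assinseq-2}, each of the hypotheses $\assinseq{b'}{P'_S}{S\rep}{0}{Q'_{S,b'}}$ occurring on the left-hand side is itself derivable from $\Th{\ServAlg}$ alone. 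Cutting these hypotheses out of the proof produced by Lemma~\ref{lemma-most-general-assinseq-1} yields $\Th{\ServAlg} \Ent \assinseq{b}{P}{S\rep}{0}{Q}$, which is what was needed.

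Since the heavy lifting — in particular, the use of the expressiveness hypothesis to obtain strongest post-conditions, and the single essential application of the repetition rule (R5) — has been discharged inside Lemmas~\ref{lemma-most-general-assinseq-1} and~\ref{lemma-most-general-assinseq-2}, no genuine obstacle remains at this stage; the only point requiring care is that the set of auxiliary asserted instruction sequences supplied by Lemma~\ref{lemma-most-general-assinseq-1} is finite (bounded by $\nm{ub}_{S\rep,b}$), so that the cut step legitimately turns a derivation from $\Th{\ServAlg}\cup\Psi$ into one from $\Th{\ServAlg}$.
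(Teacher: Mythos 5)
Your proposal is correct and is essentially the paper's own argument: the paper's proof is literally the one-line statement that the theorem is an immediate consequence of Lemmas~\ref{lemma-repetition}--\ref{lemma-most-general-assinseq-2}, and your write-up is just the natural unpacking of that assembly (reduce to repetitions via Lemma~\ref{lemma-completeness}, force $e=0$ via Lemma~\ref{lemma-repetition}, then discharge the auxiliary hypotheses of Lemma~\ref{lemma-most-general-assinseq-1} using Lemma~\ref{lemma-most-general-assinseq-2} and a cut). The only trivial imprecision is that the hypothesis set supplied by Lemma~\ref{lemma-most-general-assinseq-1} ranges over all subterms of $S\rep$ of the form $T\rep$, not just the outermost one, but Lemma~\ref{lemma-most-general-assinseq-2} covers all of these anyway.
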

\begin{proof}
\sloppy
This result is an immediate consequence of 
Lemmas~\ref{lemma-repetition}--\ref{lemma-most-general-assinseq-2}.
\qed
\end{proof}

\section{Concluding Remarks}
\label{sect-concl}

We have presented a Hoare-like logic for proving the partial correctness 
of a single-pass instruction sequence as considered in program algebra
and have shown that it is sound and complete in the sense of Cook.
We have extended the asserted programs of Hoare logics with two natural 
numbers which represent conditions on how execution enters and exits an 
instruction sequence.
By that we have prevented that pre- and post-conditions can be 
formulated in which aspects of input-output behaviour and flow of 
execution are combined in ways that are unnecessary for proving 
(partial) correctness of instruction sequences.
We believe that by the way in which we have extended the asserted 
programs of Hoare logics, the presented Hoare-like logic remains as 
close to Hoare logics as possible in the case where program segments 
with multiple entry points and multiple exit points have to be dealt 
with.

In contrast with most related work, we have neither taken ad hoc 
restrictions and features of machine- or assembly-level programs into 
account nor abstracted in an ad hoc way from instruction sequences as 
found in low-level programs.
Moreover, unlike some related work, we have stuck to classical 
first-order logic for pre- and post-conditions.
In particular, the separating conjunction and separating implication
connectives from separation logics~\cite{Rey02a} are not used in pre- 
and post-conditions
Because of this, most related work, including the work reported upon 
in~\cite{JBK13a,MG07a,SU07a}, is only loosely related.

Most closely related is the work reported upon in~\cite{TA06a,Wan76a}.
The form of asserted instruction sequences is inspired 
by~\cite{Wan76a}.
However, as explained in Section~\ref{sect-intro}, their interpretation 
differs somewhat.
Moreover, no attention is paid to soundness and completeness issues 
in~\cite{Wan76a}.
An asserted program from~\cite{TA06a} corresponds essentially to a 
set of asserted instruction sequences concerning the same instruction
sequence fragment.
The particular form of these asserted programs has the effect that 
proofs using the program logic from~\cite{TA06a} involve a lot of 
auxiliary label manipulation.

\bibliographystyle{splncs03}
\bibliography{IS}

\end{document}